\DeclareMathOperator*{\argmin}{arg\,min}
\DeclareMathOperator*{\argmax}{arg\,max}
\newtheorem{nclem}{Lemma}
\newtheorem{nccon}{Condition}
\newtheorem{ncdef}{Definition}
\title{Subgoal Planning Algorithm for Autonomous Vehicle Guidance}
\author{
Andrew Feit
\enspace
and
\enspace
B\'{e}r\'{e}nice Mettler
\\
{ 
\normalsize
\itshape
iCueMotion, LLC. San Francisco, CA, 94103
}
}
\begin{document}
\maketitle
\begin{abstract}
Trained humans exhibit highly agile spatial skills, enabling them to operate vehicles with complex dynamics in demanding tasks and conditions. Prior work shows that humans achieve this performance by using strategies such as satisficing, learning hierarchical task structure, and using a library of motion primitive elements. A key aspect of efficient and versatile solutions is the decomposition of a task into a sequence of smaller tasks, represented by subgoals. The present work uses properties of constrained optimal control theory to define conditions that specify candidate subgoal states and enable this decomposition. The proposed subgoal algorithm uses graph search to determine a subgoal sequence that links a series of unconstrained motion guidance elements into a constrained solution trajectory. In simulation experiments, the subgoal guidance algorithm generates paths with higher performance and less computation time than an RRT* benchmark. Examples illustrate the robustness and versatility of this approach in multiple environment types.
\end{abstract}

\section*{Nomenclature}
{\renewcommand\arraystretch{1.0}
\noindent\begin{longtable*}{@{}l @{\quad=\quad} l@{}}
\multicolumn{2}{@{}l}{Sets} \\
$\mathcal{W}$  & Task workspace \\
$\mathcal{F}$ &  Free workspace \\
$\mathcal{O}_E$, $O_i$ & Environment obstacles \\
$\mathcal{Q}$ & Configuration space \\
$\Theta$ & Vehicle orientation space \\
$\mathcal{V}$ & Vehicle configuration rate space \\
$\mathcal{X}$ & System state space \\
$\mathcal{U}$ & Vehicle control space \\
$\mathcal{S}$ & Vehicle trajectory space \\
$\mathcal{G}$ & Subgoal candidate set \\
$\mathcal{D}$ & Guidance sub-domain \\
$\mathcal{B}$ & Constrained workspace domain \\
\multicolumn{2}{@{}l}{Functions} \\
$j$ & Differential cost function \\
$J$ & Trajectory cost functional\\
$H$ & Hamiltonian function\\
$f$ & State space dynamics\\
$k$ & Tracking control policy \\
$h$ & Perceptual information extraction function \\
$\Phi$ & Task transition function \\
$\pi$ & Guidance policy \\
$\Pi$ & Guidance mapping \\
$\gamma$ & Task planning map \\
$\mathcal{L}$ & Lagrange function \\
$c$ & Constraint function \\
$\Psi$ & Symmetry transformation \\
$\theta$ & Inverse cost function \\
$g$ & Bounding function \\
$V$ & Lyapunov function \\
$U$ & Utility function \\
\multicolumn{2}{@{}l}{Variables} \\
$t$, $t_0$, $T$ & Time, start time, final time \\
$\mathbf{x}$ & Vehicle state \\
$\mathbf{p}$ & Vehicle position \\
$\mathbf{v}$ & System velocity \\
$\mathbf{u}$ & Vehicle control \\
$\bar{s}$ & Trajectory \\
$g$ & Subgoal state \\
$\Gamma$ & Subgoal sequence \\
$\lambda$ & Hamiltonian costate-vector\\
$P$ & System plant \\
$\phi$ & Lagrange multiplier \\
$\mathbf{a}$, $\mathbf{b}$ & Example system states \\
$T_B$ & Separatrix \\
$S_B$ & Bounding trajectories \\
$\xi$ & Lyapunov integrating variable \\
$c$, $\alpha$ & Finite-time stability constants \\
$n$, $N$ & Number of subgoals \\
$\epsilon$ & Planner differential cost tolerance \\
$E$ & Subgoal graph edges \\
$H$ & Neighbor subgoal heuristic list \\
$\psi$ & Vehicle heading angle \\
$x$, $y$ & Vehicle position \\
$\theta_g$ & Bearing to subgoal \\
$k$ & Perceptual guidance constant \\
$d$ & Distance to subgoal \\
\multicolumn{2}{@{}l}{Subscripts} \\
$lat$, $lon$ & Lateral and longitudinal axes, e.g. steering and forward speed \\
$ref$ & Reference or desired value \\
$k$ & Subgoal index \\
\end{longtable*}}

\section{Introduction}
Humans and animals have a natural ability for motion guidance because it is a fundamental capability for interacting with the world. Human guidance encompasses motion of one's own body, as well as control of vehicles from a first-person perspective, such as driving a car or flying an aircraft. 
Observing motion behavior in competitive activities, such as athletics or vehicle control, shows that humans learn over time to integrate spatial planning (i.e. configuration space) with dynamic planning (i.e. velocities and forces) to achieve high task performance (such as in terms of travel time, speed, or effort). In addition to performance, humans excel in terms of adaptability to a wide range of environments, versatility to novel tasks not previously encountered, and robustness to uncertainty in system dynamics and constraints \cite{mettler2011structure}. Humans can also modulate their behavior to trade-off safety vs. performance in response to specific task requirements. 

While human brains as a whole can process large amounts of information, they have specific limitations. Humans have limited working memory, restricting the planning search space and the number of environment objects that can be considered simultaneously \cite{baddeley1992working}. In addition, humans are subject to problem solving limitations. For example, humans do not posses a shortcut to solve NP-complete combinatorial optimization problems. Finally, human senses, such as vision, proprioception, and inertial (by the vestibular system) provide only partial observability of the agent-environment state. Humans therefore must integrate their sensory inputs to achieve sufficient situational awareness \cite{borah1988optimal}. Despite these limitation, athletes and pilots achieve performance significantly beyond the capabilities of current autonomous or robotic systems. 

Mettler et al \cite{mettler2015systems} term the challenges above collectively as \textit{behavioral complexity}. Prior work describes two approaches by which humans reduce the behavioral complexity of motion planning tasks. First, Simon introduced the concept of satisficing \cite{simon1972theories} to describe behavior that may not be optimal with respect to a utility function, but satisfies constraints. 
This idea suggests that humans use a simplified planning process to obtain sub-optimal, constraint-satisfying solutions quickly and efficiently.
Simon identifies characteristics of tasks that make optimization difficult, such as uncertainty in the reward for taking specific actions, and uncertainty in the set of available actions at each instant in time. In addition, computation of the true cost or value function may be too complex. As a result, humans must employ heuristics, prune the decision tree, or approximate the true value function to make decisions efficiently. 
Second, Simon suggests that humans simplify a task by taking advantage of sparse agent-environment relationships. Simon introduces the idea that, similar to the study of other physical phenomenon, these relationships take the form of invariant quantities in agent-environment system dynamics, and must be identified to understand human behavior \cite{simon1990invariants}. 

Mettler and Kong extend the concept of invariants to motion control. The authors present experimental results indicating that humans exploit invariants to reduce problem complexity \cite{kong2013modeling}. From a task perspective, invariants take the form of equivalence classes across the task domain that decompose a problem into a smaller set of common subtasks. The identification of common subtasks is a form of structure learning \cite{braun2010structure}. Subtasks serve as a unit of organization that simplify the decision making domain. From a behavior perspective, invariants manifest as interaction patterns (IP) \cite{kong2009general, kong2013modeling, mettler2011structure, mettler2015systems, mettler2017emergent}. IP describe action-perception relationships that generate common units of behavior across multiple parts of a task-environment domain. The present work integrates these invariants of human behavior into an autonomous motion planner. The approach taken in this work accounts for the limitations of both biological and machine systems by applying concepts observed in human behavior to deal with task complexity.

\section{Problem Formulation and Overview}
\subsection{Research Objective}

In this paper we present a motion planning algorithm that uses subgoals to find a constrained optimal control solution for a vehicle navigating through an obstacle field. This problem is motivated by prior work that established a high-level description of human motion behavior \cite{mettler2015systems, mettler2017emergent}, including principles that define task representation in terms of subgoals. The specific objective of the present paper is to extend this description to formalize subgoal planning from the perspective of optimal control.

The planning algorithm we present uses two important concepts from prior work to define specific conditions needed to generate solutions efficiently, while accounting for computational and perceptual limitations. The first concept is that modeling human behavior as a hierarchical system accounts for a reduction in spatial control task complexity \cite{mettler2013hierarchical, mettler2015systems}. The hierarchical model formally delineates planning, guidance, and tracking functions. Within this structure, each function may be solved optimally, or sub-optimally in exchange for reduced computational requirements (i.e. satisficing \cite{simon1972theories}). The hierarchical model follows from the observation that humans use invariants in task-environment dynamics to reduce behavior representation requirements.
The second concept is that these invariants allow agents to generalize task knowledge as policies that are applicable across a task domain. Previous work investigated the role of invariants by identifying the guidance function that agents use to generate continuous motion trajectories between task states \cite{kong2013modeling, feit2015experimental, feit2016extraction}.  This research showed that segments of environment-agent behavior across a task domain are equivalent through a set of symmetry transformations, allowing a wide range of motion behavior to be described by a smaller set of unique interactions, or guidance primitives.

The work presented in this paper extends this subgoal identification and planning approach first presented by Feit and Mettler \cite{feit2015human} to formulate the trajectory optimization problem as a graph search across connected transition states, or subgoals (\cite{kong2009general, kong2013modeling, mettler2011structure, mettler2015systems}). This paper first derives properties of the constrained optimal control problem that provide rules for identifying these transition points as cost minima within the search domain. The paper then describes a planning algorithm that computes a solution as a sequence of subgoals. The agent deploys the resulting solution by generating continuous paths between subgoals using canonical guidance primitives.

This graph search approach is applicable to the case when the environment is fully-known to the agent, but provides the basis for future work needed to apply the concepts of this planning algorithm to exploratory motion guidance in partially known environments.

\subsection{Problem Formulation}
\begin{figure}[hbt]
\centering
\includegraphics[width=0.45\linewidth]{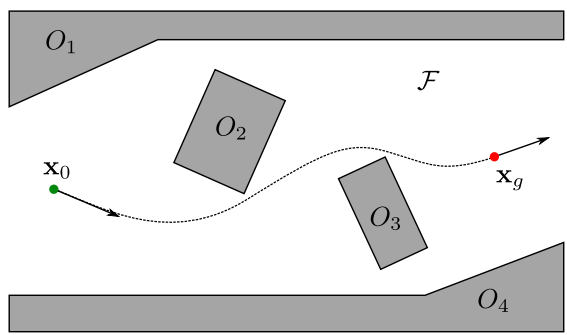}
\caption{Guidance problem overview.}
\label{fig:problem_overview}
\end{figure}
To integrate human motion guidance strategies into an autonomous guidance system, we first describe the task in terms of dynamics and control principles. This formulation provides a connection between human behavior characteristics, and dynamic trajectory optimization, which is the traditional framework for aerospace vehicle guidance. 

The guidance task is modeled as a constrained optimal control problem as depicted in Fig. \ref{fig:problem_overview}. The task consists of a state-space, which is the product space of the workspace, vehicle orientations, and system rates, $\mathcal{X} = \mathcal{W} \times \Theta \times \mathcal{V} \subset \mathbb{R}^n$, and a control space $\mathcal{U} \subset \mathbb{R}^m$. The workspace, $\mathcal{W} = \mathcal{O}_E \cup \mathcal{F}$, contains free space ($\mathcal{F}$) and environment objects ($\mathcal{O}_E$). Environment objects are a subset of the workspace $ \mathcal{O}_E \subset \mathcal{W} $ that act as constraints on feasible system trajectories. Constraints are defined by a set of inequalities, $ c_i(\mathbf{x}) \leq 0 $ for $ i = \left[ 1,n \right] $. As depicted in Fig. \ref{fig:problem_overview}, constraints can consist of a set of discrete, mutually disjoint objects, $\mathcal{O}_E = \bigcup_{i=1}^{k} O_i$, where each object may be an infinite subset, or a finite region. The task configuration space (or C-space) is $\mathcal{Q} = \mathcal{W} \times \Theta$ (and $\mathcal{Q}_{free} = \mathcal{F} \times \Theta$), where $\Theta$ describes system configurations such as vehicle component orientation that can change independently of workspace position. System rate $\mathcal{V}$ describes the rate of change of vehicle configuration. 
The task objective is to guide a system from an initial location, $\mathbf{p}_0 \in \mathcal{F}$ to a goal state, $ \mathbf{x}_g \in \mathcal{Q}_{free} \times \mathcal{V} $, while satisfying system dynamics and spatial constraints, and minimizing trajectory cost. 

Motion of the system through space is constrained by the dynamics of the vehicle or body, $\dot{\mathbf{x}}(t) = f(\mathbf{x}(t), \mathbf{u}(t))$, with control input sequence $\mathbf{u}(t) \in \mathcal{U}$, for time $t \in \left[ 0, T \right] \subset \mathbb{R} $, over a state-space domain, $\mathcal{X}$. 
The function $ \mathbf{x}(t) : \left[ 0, T \right] \rightarrow \mathcal{X} $ denotes a solution to the system beginning at state $\mathbf{x}_0 = \mathbf{x}(0)$ and ending at $\mathbf{x}_g = \mathbf{x}(T)$. 
The set $ \bar{s}_{u}(\mathbf{x}_0) = \lbrace \mathbf{x}(t) \, , \, \left[ 0, T \right] \rbrace $ contains all points on the trajectory. The cost of a trajectory $ \bar{s}_{u}(\mathbf{x}_0)$ resulting from control sequence $ \mathbf{u}(t) $ is defined in the form of the functional:
\begin{equation}
\label{eqn:objective}
J(\bar{s}) = \int_{0}^{T} j(\mathbf{x}(t),\mathbf{u}(t))\mathrm{d}t
\end{equation}
In Eqn. \ref{eqn:objective}, $j(\cdot)$ is the differential path cost. The guidance problem is that of determining a control sequence $\mathbf{u}(t)$ that drives the system from an initial state $\mathbf{x}_0$ to a goal state $\mathbf{x}_g$, and minimizes eqn. (\ref{eqn:objective}).
\subsection{Approach}
\subsubsection{Optimal Control}
The optimal control problem is posed as a minimization of the cost functional over the set of admissible control input series'. Analytical solutions are only possible in particular cases; including the linear quadratic regulator (LQR) feedback control and finite-time minimum energy feed-forward control \cite{athans2006optimal, hespanha2009linear}. Optimal control sequences throughout a system domain may be expressed as a policy function, $\mathbf{u}^* = k^*_{x_g}(\mathbf{x})$, with $J_{x_g}^*(\mathbf{x})$ expressing the optimal cost of a trajectory from state $\mathbf{x}$ to the goal $\mathbf{x}_g$ resulting from the optimal control sequence. While linear full-state control policies can be directly synthesized \cite{hespanha2009linear} for arbitrary state dimension, planning trajectories under nonlinear system dynamics and arbitrary constraints remains a hard problem. For example, a rotorcraft position controller must determine four control actuator values based on 14 or more states. To avoid this complexity, we use the following two concepts as introduced above: the hierarchical model of human guidance, and the spatial value function.

\subsubsection{Hierarchical Task Model}
The hierarchical model of human guidance (Fig. \ref{fig:hierarchical_model}) introduced by Mettler and Kong\cite{mettler2013hierarchical} partitions motion behavior into subgoal planning, kinematic guidance of a reference trajectory, and regulation of higher-order states. 
\begin{eqnarray}
\label{eqn:hierarchicsystemdynamics}
\text{Task transition:}&  g_{k+1} = \Phi(g_k, \pi_k) \nonumber \\
\text{Kinematics:}&  \dot{\mathbf{x}}_p(t) = \mathbf{v_{ref}}(t) \\
\text{Dynamics:}&  \dot{\mathbf{v}} = f(\mathbf{v},\mathbf{u}) \nonumber
\end{eqnarray}
Eqn. \ref{eqn:hierarchicsystemdynamics} delineates the system dynamics involved with each of these levels. To decouple the planning, guidance and tracking problems in this way, the models for each level must be separable; for example, the guidance level depends on the tracker to regulate the system to the reference velocity. Problem complexity is then reduced by identifying control relationships that are invariant across sub-domains of the task. For example, kinematics are invariant with respect to spatial translation and rotation of a target subgoal, so therefore every subgoal represents an equivalent sub-task. Similarly, tracking dynamics are invariant with respect to spatial location. 

\subsubsection{Spatial Value Function}
The separation of kinematics and system dynamics is enabled by the concept of a spatial value function (SVF). The SVF was introduced by Mettler and Kong to model human guidance behavior, based on precision interception experiments with miniature remote control helicopters \cite{mettler2013mapping}.
A SVF, also referred to as cost-to-go (CTG), specifies the optimal cost of a trajectory starting from a given system configuration $\mathbf{p} \in \mathcal{F}$ to the goal, denoted $J_{\mathbf{x}_g}(\mathbf{p})$. An optimal control policy based on the SVF defines reference system rates as a function of system configuration, i.e. $ \mathbf{v}_{ref} =\pi_{\mathbf{x}_g}(\mathbf{p}) : \mathcal{W} \times \mathcal{X} \rightarrow \mathcal{V} $. 
This policy, referred to as the velocity vector field (VVF), is the gradient field of the SVF, $\nabla J_{x_g}$. Optimal spatial VVF and CTG functions are: 
\begin{eqnarray}
\pi^*_{\mathbf{x}_g}(\mathbf{p}) &=& \argmin_{\mathbf{v}_{ref} \in \mathcal{V}} J_{\mathbf{x}_g}(\mathbf{p}, \mathbf{v}_{ref}) \\
J^*_{\mathbf{x}_g}(\mathbf{p}) &=& \min_{\mathbf{v}_{ref} \in \mathcal{V}} J_{\mathbf{x}_g}(\mathbf{p}, \mathbf{v}_{ref}) \nonumber
\label{eqn:vvf}
\end{eqnarray}
The VVF and CTG, acting in feedback with system dynamics, specify optimal configuration space trajectories. Integrating the VVF, i.e. $\dot{\mathbf{p}} = \pi_{\mathbf{x}_g}(\mathbf{p})$ from an initial configuration $\mathbf{p}_0$, generates a trajectory $ \bar{s}^{\pi}(\mathbf{p}_0,\mathbf{x}_g) $, with the cost $ J_{\mathbf{x}_g}^{\pi}(\mathbf{x}_0) = J(\bar{s}^{\pi}(\mathbf{x}_0,\mathbf{x}_g))$. As a result, the field $ \pi_{\mathbf{x}_g}(\mathbf{p}) $ defines a mapping $ \Pi_{\mathbf{x}_g} : \mathcal{W} \rightarrow \mathcal{S}_{x_{g}} $ from initial configurations to trajectories. Importantly, the resulting spatial policy is invariant with respect to symmetry transformations in the configuration space, allowing it to be re-used across a task domain as a planning-level behavior element. 

With this kinematic decoupling, the guidance task consists of learning a spatial guidance policy $\Pi$ that generates trajectories to a goal state. The tracking task consists of determining a control policy that stabilizes the vehicle velocity $\mathbf{v}(t)$ near the reference velocity $\mathbf{v}_{ref}(t)$ specified by the guidance level, for example using feedforward (such as by dynamic inversion) and feedback components \cite{mettler2013mapping}. The following work focuses on the guidance component and how it relates to subgoal planning.
\subsubsection{Human Guidance Policy}
Humans often do not generate guidance behavior that is optimal with respect to typical control costs such as settling time or energy \cite{simon1972theories}. Following the idea of bounded rationality, a sub-optimal solution can be faster or easier to generate for tasks in which computing an optimal policy is computationally intractable or when the perceptual signals needed to implement an optimal policy are not available. In addition, humans primarily rely on perceptual guidance, which means they are subject to perceptual information constraints. Prior work shows that humans can approximate optimal behavior using a perceptual guidance policy that maximizes the information channel between perceived visual cues and guidance actions \cite{tishby2011information}. To investigate the guidance strategies used by humans, prior work recorded experimental human guidance behavior in third and first-person navigation tasks \cite{mettler2013mapping, feit2015experimental}. The present research uses the vehicle dynamics and environment obstacle configuration used by Feit and Mettler in the human experiments \cite{feit2015experimental, feit2016extraction}. The resulting data shows that human behavior can be modeled by a consistent guidance policy, in the form of a SVF. Based on this idea, the present work defines a nominal policy, $ \pi_{\mathbf{x}_g}(\mathbf{p})$, that generates a sub-optimal reference trajectory given an initial configuration state and a goal state. The sub-optimal cost is greater than or equal to the cost of an optimal trajectory, $ J_{\mathbf{x}_g}^{\pi}(\mathbf{x}_0) = J(\bar{s}^{\pi}(\mathbf{x}_0,\mathbf{x}_g)) \ge J_{\mathbf{x}_g}^*(\mathbf{x}_0)$. 
\subsection{Overview}
The rest of the paper is organized as follows: Section (\ref{sec:related_work}) provides a background on related work in both computational methods for constrained optimal control and on current research in understanding human guidance behavior. Section (\ref{sec:theoretical_approach}) reviews the optimal control problem formulation. Section (\ref{sec:structural_properties}) applies properties of the constrained optimal control problem to define planning heuristics based on task-environment structure. Section (\ref{sec:implementation}) provides details of the algorithm implementation, computational complexity, and stability analysis. Section (\ref{sec:experiment}) compares subgoal planning performance and computation time to an RRT* benchmark planner. Section (\ref{sec:discussion}) provides a discussion of results and future research directions. Finally, section (\ref{sec:conclusion}) provides concluding remarks.
\section{Related Work}
\label{sec:related_work}
This section summarizes recent work investigating characteristics of human cognition in decision making and planning, as well as approaches to computational autonomous vehicle guidance.
The objective is to evaluate ways that current approaches to computational guidance reduce complexity by incorporating characteristics of human behavior. 
Marr's levels of computational analysis \cite{marr1976understanding} provides a framework for comparing algorithms to human cognitive functions in terms of qualities such as the underlying computational theory or objective, task representation, and implementation mechanism. 
The concept of satisficing \cite{simon1972theories} is referenced throughout 
to describe key methods that humans use to achieve sufficiently good sub-optimal solutions despite having limited computational resources.
\subsection{Computational Approaches}
Optimal control theory provides a framework for defining the motion guidance problem, and provides benchmark solutions for comparison of human and other sub-optimal planning approaches.
Optimal control solutions are typically computed analytically \cite{hespanha2009linear, athans2006optimal}, resulting in either a policy function, or complete solution trajectory. 
While the computational objective may be the same in both analytical optimal control and human control, humans learn behavior empirically over many trials. For example, Mettler and Kong \cite{mettler2013mapping} show that guidance policies learned by humans for a motion guidance task take the form of a spatial policy map.

Real-world guidance problems faced by humans typically involve constraints on allowable solution trajectories and nonlinear system dynamics. 
Closed-form solutions to the Hamiltonian function are generally not possible for these problem types. \cite{borrelli2009constrained}. The Hamiltonian formulation however may be used to understand properties of optimal solutions. For general system dynamics and constraint boundaries, a dynamic programming approach must be used \cite{borrelli2005dynamic}. Dynamic programming algorithms are typically implemented by iterating through a large number of finely discretized system states. Task representation as a finely discretized grid is likely incompatible with a biological system because humans and animals do not perceive global quantities such as position or velocity to this degree of precision. In addition, dynamic programming implemented as an iterative process is incompatible with the known neuron-network structure of the brain. Yet, humans easily generate behavior in the presence of nonlinear constraints, suggesting that they utilize an effective sparse representation to achieve their results.

In contrast to analytical approaches, some existing motion planning methods use task representations that are more compatible with biological neural mechanisms. 
In particular, in the field of behavioral robotics \cite{arkin1998behavior}, complex robot behavior results from simpler sensory-motor feedback rules. Artificial potential fields are a method commonly used in robotics path planning, and define a global cost function based on start and goal locations, and obstruction geometry \cite{warren1989global}. A shortcoming of this method is that the potential field can contain local minima, to which solution trajectories may be incorrectly drawn. Forward-chaining \cite{koren1991potential, bell2004forward} is a work-around that avoids local minima by placing intermediate goal points at locations near obstructions. The intermediate goals used in forward chaining are similar to subgoals introduced by Kong and Mettler, but are placed based on an empirical heuristic rather than based on invariants derived from structural characteristics of the agent-environment interactions.

Receding horizon planning \cite{mettler2010agile, primbs2000receding} also uses a concept of an intermediate waypoint to integrate prior global task knowledge with perceptual information from the local environment. This approach allows an agent to incorporate locally-perceived environment constraint information into a global cost-to-go function using a learning process \cite{verma2016computational}.
With this approach, the solution converges to optimal after multiple trials. Local planning in RH guidance, however, can still be computationally expensive. Given the repeated local interactions, such a process should be able to account for previously computed solutions. Verma and Mettler describe a vehicle motion task learning framework based on receding-horizon guidance \cite{verma2016computational}. This framework can be described in terms of Cowan's human information-processing model \cite{cowan2005capacity}, consisting of long-term and working memory elements \cite{baddeley1992working}. Through this framework, Verma and Mettler also describe how the active-waypoint concept in RH guidance relates closely to visual focus of attention in human guidance behavior.

\begin{figure}
    \centering
    \includegraphics[width=.30\textwidth]{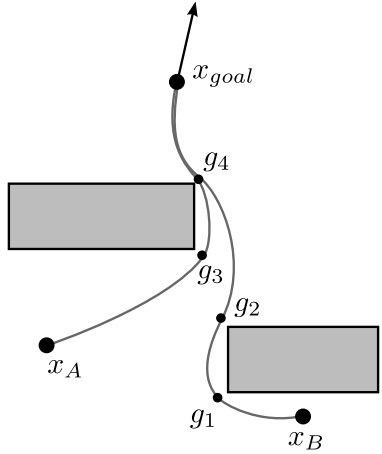}
    \caption{Illustration of subgoal and guidance equivalence classes.}
    \label{fig:equivclasses}
\end{figure}
Spatial cost functions are a plausible task representation for biological systems, but humans routinely navigate in large and complex domains, for which such a representation would require an intractable amount of storage space. To improve the efficiency of policy representations, the following methods utilize the repeating structure present in many control tasks. Tedrake et al. introduced the concept of linear quadratic regulator (LQR) trees \cite{tedrake2010lqr}. LQR trees expand the region of convergence of a regulator by using a hierarchical tree of LQR funnels. Each funnel need only be stable for a defined local region, thereby increasing the global robustness of the system. LQR trees are similar to the subgoal approach presented in this work, which form a global solution by hierarchically combining local solutions. Van Dijk generalizes the identification of subgoals in Markov decision process (MDP) problems \cite{dijk2011grounding} using the concept of an information bottleneck \cite{tishby2000information}. An information bottleneck is a system state that causally separates points along a trajectory; future states are independent of states prior to the bottleneck. This subgoal approach is based on the concept of options \cite{sutton1999mdps}, which are MDP policies that act over a series of states until a termination condition is reached. Options can increase learning rate and decrease policy representation size by re-using policy segments in different parts of a task.

Frazzoli et al \cite{frazzoli1999hybrid, frazzoli2002real} introduced the maneuver automaton (MA), which also takes advantage of task structure in the form of a finite discretization of task dynamics. 
An MA model consists of a library of motion primitive elements describing behavior types, such as straight-line motion or a constant-radius turns. 
In \cite{gavrilets2004human}, the proposed MA contains behavior types inspired by observed human pilot behavior during acrobatic flight. 
Each primitive element may be optimal with respect to vehicle dynamics for a particular guidance objective. 
A sequence of motion primitive elements describe a solution trajectory satisfying environmental constraints.
The MA model simplifies dynamic trajectory planning by limiting actions to the discrete set of motion primitives.
%
In contrast to planners that use task structure to discretize actions, graph-based approaches coarsely discretize the state space a-priori from environment topology. A graph is constructed as a set of nodes and edges within the system state space, and an optimal solution is found using a search algorithm such as A* or D* \cite{russell2003artificial, likhachev2005anytime}. Road-map methods such as Voronoi \cite{bhattacharya2008roadmap} and visibility graphs \cite{lozano-perez1979algorithm} construct graphs by connecting the free space of an environment using heuristics such as safety or path length. Homotopy classes are another graph construction approach that take into account the topological structure of the environment by identifying discrete trajectory equivalence classes describing each possible route around obstacles \cite{bhattacharya2012topological}. Road-map and homotopy class methods offer a plausible model for how humans represent a task environment in terms of the discrete set of free paths or affordances available to the agent \cite{gibson1977theory}. Configuration state graph representations such as these do not however explicitly consider interactions between the environment, system dynamics, and agent perception. To make solutions dynamically feasible, they may be smoothed in post-processing based on vehicle dynamic capabilities \cite{bottasso2008path}. The smoothing process reduces optimality however if it does not consider the global cost function. 
Graph search can be used to compute solutions in the higher-dimension dynamic state space, but becomes computationally expensive due to the high branching factor \cite{bellman1961curse}.

Alternatively, random sampling approaches such as RRT \cite{lavalle1998rapidly} and RRT* \cite{karaman2010optimal} can quickly explore higher-dimension task state spaces to account for dynamic interactions. RRT*-SMART \cite{nasir2013rrt} includes a node-elimination step, resulting in paths that are defined by a minimal set of beacon points. Results show that beacon points converge toward obstacle corners. To take advantage of this observation, CBiRRT* \cite{berenson2009manipulation} uses an optimization step to project sampled nodes onto constraint boundaries, even when the boundaries are not analytically defined. CBiRRT* generates impressive results in high-dimension tasks such as robotic arm motion planning in the presence of torque and configuration state constraints.

Each of these computational approaches are inspired by physical limitations present for both machines and humans. For example, human working memory \cite{cowan2005capacity, baddeley1992working} limits the number of graph nodes that a human can consider simultaneously. Hirtle et al. present evidence that humans may use hierarchical task representations to approximate an environment configuration \cite{hirtle1985evidence}, reducing the number of nodes needed to plan a route. Maneuver automaton, roadmap, and random sampling approaches each mimic aspects of human motion planning that allows them to overcome certain computational limitation. Motion primitive approaches model the way humans plan dynamic trajectories using a library of learned interactions, reducing the set of actions the agent must consider at each step. 
Roadmap methods model feasible paths through the environment independent of a specific task to also limit search dimensionality. Random sampling approaches model a method to trade-off exploration vs planning time, and how humans brains may model the stochastic aspects of a planning task.

\subsection{Human Guidance Modeling}



\subsubsection{Equivalence Classes}
The subgoal planning algorithm presented in this work is based on central concepts in human behavior motion modeling that have been developed in prior work. First, Kong and Mettler recognized patterns present in optimal guidance trajectories within a constrained environment \cite{kong2009general} consisting of subgoals and partitions, and formulate the hypothesis that these patterns emerge in human motion behavior as the result of a learning process \cite{kong2013modeling, mettler2017emergent}. The authors then formalize these subgoals and the associated partition structure. The subgoal concept is related to a method outlined by Shalizi and Crutchfield for determining the hierarchical, causal structure of a process based on equivalence classes within a set of observed signal sequences \cite{shalizi2001computational}. Using this approach, two trajectories belong to the same subgoal equivalence class ($\sim_S$) if they begin at different locations, join together at some point, and remain together until they reach the final goal. For example in Fig. \ref{fig:equivclasses}, trajectories beginning at $\mathbf{x}_A$ and at $\mathbf{x}_B$ belong to the same subgoal equivalence class through subgoal $g_4$. Each partition contains a set of system states for which an optimal trajectory to the goal passes through a common subgoal location. 

\subsubsection{Guidance Model}
The second concept is based on the observation that, upon identifying subgoals, the intermediate trajectory segments share similarities across a problem domain.
The guidance equivalence relation $\sim_G$ relates trajectory segments across partitions, and is based on symmetries in system dynamics. 
Two trajectories belong to the same guidance equivalence class if one can be transformed into the other by applying symmetry transformations. 
For example in Fig. \ref{fig:equivclasses}, trajectory segments $\lbrace g_1, g_2 \rbrace \sim_G \lbrace g_4, x_{goal} \rbrace$ are equivalent through a translation and a rotation, and therefore describe instances of the same subtask solution. 

The guidance equivalence suggests that solution trajectories consist of repeated application of common guidance elements. Furthermore, this symmetry transformation can be applied to segments throughout a trajectory, and across multiple trajectories in a task environment, to aggregate them into a common goal frame. Mettler and Kong show that this aggregate set can then be described by a local spatial motion policy \cite{mettler2013mapping}, in the form of a cost-to-go (CTG) and velocity vector field (VVF) function \cite{kong2011investigation}. The authors validate this model in first-person human guidance behavior \cite{feit2016extraction}.

Spatial cost and policy functions specify the reference velocity as a function of configuration state relative to a subgoal. 
The spatial policy collapses the full-state value function into the configuration space, so as to reduce planning complexity. 
This reduction in dimension is possible because the tracking function regulates higher-order system states to the reference velocity. 
%
\begin{figure}[h]
\centering
\includegraphics[width=0.5\linewidth]{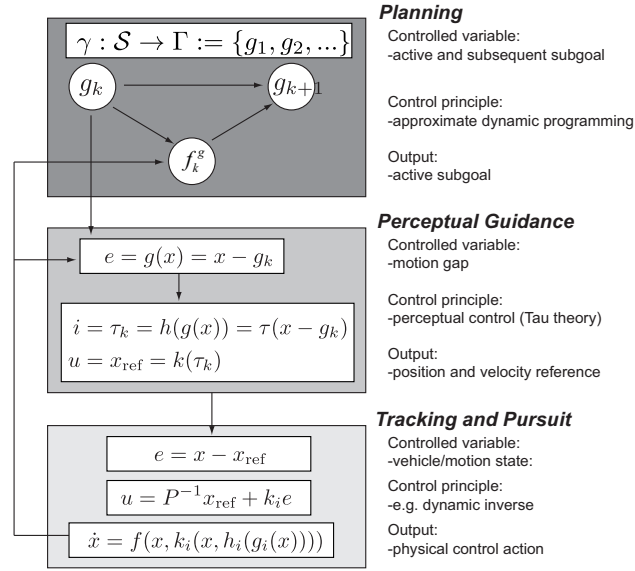}
\caption{Hierarchical model of human motion guidance presented in \cite{mettler2013hierarchical}.}
\label{fig:hierarchical_model}
\end{figure}
\subsubsection{Hierarchical Behavior Model}
The above computational approaches such as RH or RRT planning each discretize the environment in different ways to connect task-level decision making with continuous motion control.
To formalize this interaction between planning levels in human behavior, Mettler et al introduce a hierarchical model connecting interaction patterns \cite{mettler2013hierarchical, mettler2015systems}.
Building on the hierarchical task model described previously, the behavior model proposes three levels for human motion guidance: planning, guidance, and tracking, as illustrated in Fig. \ref{fig:hierarchical_model}. 
Planning involves determining a series of subgoal states $g_k \in \mathcal{F}$, that divide the problem into subtasks. Subgoals are chosen based on task properties such as constraint geometry and system dynamics. Subgoals de-couple discrete action planning from the continuous motion guidance task. The guidance level generates motion elements that connect pairs of subgoals. During guidance, the agent determines a reference velocity trajectory as a function of perceptual information, $\mathbf{v}_{ref} = k(\mathbf{i})$. The agent extracts perceptual information from the environment as a function of system configuration state relative to a subgoal, $\mathbf{i} = h(\mathbf{x} - g_k)$.
Tracking consists of modulating agent or vehicle control inputs to follow the reference velocity trajectory, while accounting for vehicle dynamics. 
Human tracking control is modeled by a combination of system inverse dynamics (i.e. $\textbf{u} = P^{-1}\mathbf{v}_{ref}$) and feedback control ($\mathbf{u} = k(\mathbf{v} - \mathbf{v}_{ref})$). Dynamic inversion such as this is a technique that has been used successfully to mitigate the complexity involved with automatic rotorcraft control \cite{enns2006dynamic}.

\section{Optimal Control Formulation}
\label{sec:theoretical_approach}

The previous section summarized prior work by Kong and Mettler that introduced a model describing the observed structure in human guidance behavior \cite{kong2009general, kong2013modeling}.
This section defines subgoal and related solution structure, building on properties of the constrained optimal control problem formulation. The control problem formulation establishes basic properties of optimal trajectories that are subsequently used to formalize the subgoal model using constrained optimal control theory. These properties are specifically used to formulate conditions that identify candidate subgoal locations based on constraint geometry. These conditions lead to heuristics for the proposed subgoal guidance algorithm presented in subsequent sections.

\subsection{Constrained Optimal Control Formulation}
To identify characteristics of optimal subgoal locations we start 
with the typical Hamiltonian function constrained optimal control formulation:
\begin{equation}
\label{eqn:lagrange}
\mathcal{H} = \lambda(t)^T f(\mathbf{x},\mathbf{u}) - j(\mathbf{x},\mathbf{u}) + \phi c(\mathbf{x})
\end{equation}
In Eqn. (\ref{eqn:lagrange}), the first term accounts for satisfaction of system dynamic constraints, where $\lambda$ is the costate vector, acting as the Lagrange multipliers for system dynamic constraints. The second term accounts for minimization of trajectory cost, and the third term accounts for satisfaction of spatial constraints, i.e. obstacle avoidance. The term $\phi(t) \ge 0$ is the spatial constraint Lagrange multiplier, and indicates the constraint activation at time $t$ along the trajectory. 
Based on Pontryagin's minimization principle, solutions must satisfy the following necessary conditions:
\begin{eqnarray}
\frac{\partial \mathcal{H}}{\partial u_j} = 0
\label{eqn:necessary_condition_1}
& &
\frac{\partial \mathcal{H}}{\partial x_i(t)} = -\dot{\lambda}_i(t) 
\label{eqn:necessary_condition_2}
\end{eqnarray}
The first condition specifies that the optimal solution $u^*(t)$ must minimize the Hamiltonian, and can generally be used to determine $u^*(t)$ as a function of the costate vector $\lambda^*(t)$.
The second condition relates the state trajectory to the costate vector. Similar to the vector optimization case, these condition specify a perpendicularity constraint between the objective function and constraint boundaries, requiring that trajectory points with active spatial constraints occur only when the optimal velocity is tangent to an obstacle boundary.

%

\subsection{Solution Elements and Topology}
To illustrate these conditions, Fig. \ref{fig:constrained_guidance_diagram} depicts an optimization task consisting of goal state $x_g$ and obstacle $\mathcal{O}_i$ with obstacle boundary $\partial \mathcal{O}_i$. The figure contains four example trajectories, $\bar{s}_1$, $\bar{s}_2$, $\bar{s}_{3a}$ and $\bar{s}_{3b}$ generated by optimal policy $\pi_{\mathbf{x}_g}$. Trajectory $\bar{s}_1$ reaches the goal unobstructed using the optimal policy $u = \pi(x)$.
Trajectory $\bar{s}_2$ is also unobstructed, however it coincides with the obstacle boundary between points $a$ and $b$. Trajectory $\bar{s}_{3a}$ is obstructed, as shown by the dashed line passing through $\mathcal{O}_i$. however an alternate path, $\bar{s}_{3b}$ is shown, passing through points $a$ and $b$, which avoids the obstacle.

\subsection{Solution Classification}

To define conditions for subgoal points, we first formally classify the solution types illustrated in Fig. \ref{fig:constrained_guidance_diagram}.
Following the above constrained optimal control formulation and using the Hamiltonian equation, solutions $\mathbf{x}(t)$ for $t \in \left[ 0,T \right]$ take one of three forms: 
\begin{enumerate}
\item Free: $\forall t \in  \left[0,T \right], c(\mathbf{x}(t)) > 0$
\item Constrained: $\forall t \in \left[ 0,T \right], c(\mathbf{x}(t)) = 0$
\item Mixed: $\exists t \in \left[ 0,T \right], g(\mathbf{x}(t)) > 0 \wedge \exists t \in \left[ 0,T \right], g(\mathbf{x}(t)) = 0$
\end{enumerate}
As illustrated in Fig. \ref{fig:constrained_guidance_diagram},
trajectory $\bar{s}_1$ is free, because for all points $\mathbf{x} \in \bar{s}_1$, $c(\mathbf{x}) > 0$. Trajectory $\bar{s}_2$ is mixed because it consist of free and constrained segments. For example, trajectory $\bar{s}_2$ consists of three segments: $\bar{s}(\mathbf{x}_2,\mathbf{a}) \cup \bar{s}(\mathbf{a},\mathbf{b}) \cup \bar{s}(\mathbf{b},\mathbf{x}_g)$. Segments $\bar{s}(\mathbf{x}_2,a)$ and $\bar{s}(\mathbf{b},\mathbf{x}_g)$ are free, because all included points satisfy the constraint. The segment  $\bar{s}(\mathbf{a},\mathbf{b})$ is constrained, representing a special case for which the guidance policy solution trajectory coincides exactly with the constraint boundary: $\mathbf{x} \in \bar{s}(\mathbf{a},\mathbf{b})$, $c(\mathbf{x}) = 0$. Mixed solutions are divided into segments by transition points, such as points $\mathbf{a}$ and $\mathbf{b}$ in Fig. \ref{fig:constrained_guidance_diagram}.

Classifying a solution trajectory in advance as "free", "constrained", or "mixed" simplifies the solution trajectoty. A "free" solution (c.f. trajectory $ \bar{s}(\mathbf{x}_1,\mathbf{x}_g)$ in Fig. \ref{fig:constrained_guidance_diagram}) is specified by the unconstrained guidance policy (Eqn. \ref{eqn:vvf}). 
If a free solution does not satisfy constraints (c.f. trajectory $ \bar{s}_{3a}$ in Fig. \ref{fig:constrained_guidance_diagram}), then the solution must be constrained or mixed. If a solution is mixed, and the transition points between segments can be determined, then the solution trajectory can be assembled from solutions to the free and constrained segments. The next section defines properties of the optimal solution that can be used to determine transition point locations.

\subsection{Equivalence Classes}

Subgoals are first defined generally as a state along a trajectory, $ g_{i} \in \bar{s}(\mathbf{x}_0,\mathbf{x}_g) $. A subgoal $ g_{1} $ along trajectory $ \bar{s}(\mathbf{x}_0,\mathbf{x}_g) $ divides the trajectory into two segments $ \bar{s}(\mathbf{x}_0,g_{1}) $ and $ \bar{s}(g_{1},\mathbf{x}_g) $ as shown in Fig. \ref{fig:subgoal_example}.
Based on the guidance equivalence relation, all states in a segment $\mathbf{x} \in \bar{s}(\mathbf{x}_0,g_{1})$ may be transformed by the group operation $\Psi : \mathbb{R}^4 \rightarrow \mathbb{R}^4$, applying a rotation and translation to the segment such that goal states coincide, $\Psi(g_1) = \mathbf{x}_g$. The resulting transformed segment $\Psi(\bar{s}(\mathbf{x}_0,g_{1}))$ is an instance of an equivalent guidance task to the segment $\bar{s}(g_{1},\mathbf{x}_g)$, in that they are constrained by the same dynamics, and are reaching the same goal state. The guidance equivalence states that, through this transformation, every trajectory solution segment is described by a common spatial guidance policy, $\pi(\mathbf{p}_0)$, where $\mathbf{p}_0$ is the position of the vehicle relative to the goal state.


\subsection{Composite Trajectory}

A composite trajectory is described by a series of subgoals and a guidance policy. The guidance policy mapping defines a free trajectory between a pair of subgoals: $\bar{s}_i^{\pi}(g_{(i-1)}, g_{i}) = \Pi_{g_{i}}(g_{i-1})$. A series of subgoals, $\Gamma(g_0,g_n) = \left\lbrace g_{0}, g_{1}, \dots, g_{n} \right\rbrace $, form a composite trajectory as the union of trajectory segments between subgoals, where $ g_0 = \mathbf{x}_0 $ is the start and $ g_n = \mathbf{x}_g $ is the ultimate goal:
\begin{equation}
\bar{s}(\mathbf{x}_0,\mathbf{x}_g) = \bigcup_{i \in \left[ 1,n \right]} \bar{s}_i^{\pi}(g_{(i-1)}, g_{i})
\label{eqn:composite_trajectory}
\end{equation}
Eqn. \ref{eqn:composite_trajectory} describes a single smooth trajectory connecting $\mathbf{x}_0$ and $\mathbf{x}_g$ because each subgoal is part of both the subsequent and prior trajectory segments.
When a series of two or more trajectory segments are joined, the total cost of the composite trajectory is the sum of the individual segment costs, $ J(\bar{s}(\mathbf{x}_0,\mathbf{x}_g)) = \sum_{i=1}^{n} J(\bar{s}_i(g_{i-1},g_{i})) = J(\Gamma(\mathbf{x}_0,\mathbf{x}_g))$. Such a composite trajectory is piecewise-optimal if it is composed of optimal trajectory segments, $\bar{s}_i(g_{i-1},g_i) = \bar{s}^*(g_{i+1},g_i)$. Based on this, the following Lemma holds:
\begin{nclem}
The total cost of a piecewise-optimal composite trajectory is greater than or equal to the cost of a single optimal trajectory between the start and end points: 
\begin{equation}
\sum_{i=1}^{n} J(\bar{s}^*_i) \ge J(\bar{s}^*(g_0,g_n))
\label{eqn:composite_cost}
\end{equation}
\label{lem:composite}
\end{nclem}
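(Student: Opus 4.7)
The plan is to argue this purely as a constraint-restricts-minimum statement, using the definition of the composite cost already given in Eqn.~\ref{eqn:composite_trajectory} and the preceding sentence identifying $J(\bar{s}(\mathbf{x}_0,\mathbf{x}_g))$ with $\sum_i J(\bar{s}_i)$. The key observation is that $\bar{s}^*(g_0,g_n)$ is, by definition, the minimizer of $J$ over the set $\mathcal{A}(g_0,g_n)$ of all admissible trajectories from $g_0$ to $g_n$ (those satisfying the system dynamics $\dot{\mathbf{x}} = f(\mathbf{x},\mathbf{u})$ and the spatial constraints $c(\mathbf{x}) \le 0$), whereas the piecewise-optimal composite is the minimizer of $J$ over the strictly smaller set $\mathcal{A}_\Gamma(g_0,g_n) \subseteq \mathcal{A}(g_0,g_n)$ of admissible trajectories that additionally pass through every intermediate subgoal $g_1,\dots,g_{n-1}$.

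First I would state formally that the concatenation $\bar{s}_c = \bigcup_{i=1}^n \bar{s}^*_i(g_{i-1},g_i)$ lies in $\mathcal{A}(g_0,g_n)$: each optimal segment is admissible by construction, and because consecutive segments share the subgoal state as common endpoint and the state fully specifies position and velocity, the concatenation is continuous and dynamically feasible. Then I would invoke the cost-additivity relation $J(\bar{s}_c) = \sum_{i=1}^n J(\bar{s}^*_i)$, which follows from the integral form of $J$ in Eqn.~\ref{eqn:objective} applied over the partitioned time interval. Finally, since $\bar{s}_c \in \mathcal{A}(g_0,g_n)$, the defining property of the minimizer yields $J(\bar{s}^*(g_0,g_n)) \le J(\bar{s}_c) = \sum_{i=1}^n J(\bar{s}^*_i)$, which is the claim.

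Two paragraphs should suffice; the argument is essentially one line once the framing is set up. I would note briefly that equality holds precisely when the unconstrained optimal trajectory from $g_0$ to $g_n$ happens to pass through all intermediate subgoals, which provides the motivation for the subsequent sections' effort to place subgoals on optimal trajectories. The only subtle point, and the part I would be most careful about, is the admissibility of the concatenation at subgoal junctions: I would emphasize that because $g_i$ is a full system state (position plus configuration plus velocity, consistent with the definition $\mathbf{x}_g \in \mathcal{Q}_{free} \times \mathcal{V}$), no velocity discontinuity is introduced at the joins and no extra dynamic infeasibility arises, so the concatenation genuinely belongs to $\mathcal{A}(g_0,g_n)$ and the comparison is valid.
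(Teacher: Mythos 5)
Your proposal is correct, and it is essentially the argument the paper relies on: the paper states Lemma~\ref{lem:composite} without a formal proof, appealing only to the cost-additivity $J(\bar{s}(\mathbf{x}_0,\mathbf{x}_g)) = \sum_{i} J(\bar{s}_i(g_{i-1},g_i))$ established just before the lemma, and your write-up simply makes explicit the missing step — that the concatenation of the optimal segments is itself an admissible trajectory from $g_0$ to $g_n$, so its cost cannot undercut the minimum over all admissible trajectories. Your attention to continuity at the junctions (subgoals being full states in $\mathcal{Q}_{free}\times\mathcal{V}$, so no velocity jump is introduced) is a point the paper leaves implicit, and your equality condition matches the paper's remark that equality holds exactly when all subgoals lie on the single optimal trajectory.
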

The equality condition in Eqn. \ref{eqn:composite_trajectory} opccurs when all subgoals lie on the optimal trajectory, $g_i \in G \subset \bar{s}^*(g_0,g_n)$.
\begin{figure}
  \centering
  \subfigure[A trajectory divided by a subgoal.]
  {\includegraphics[height=5.0cm]{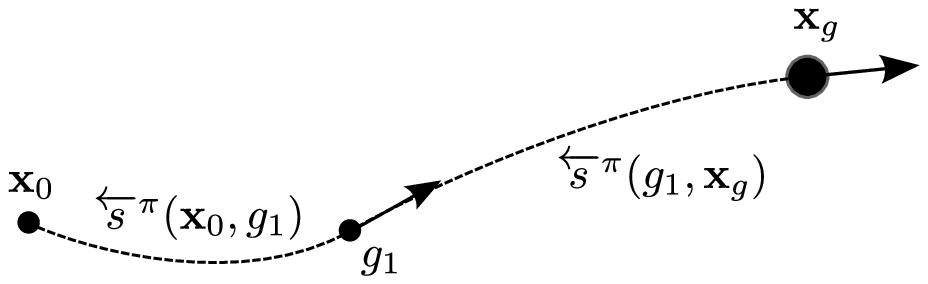}
  \label{fig:subgoal_example}}
  \hfil
  \subfigure[Obstruction avoidance]{\includegraphics[height=5.0cm]{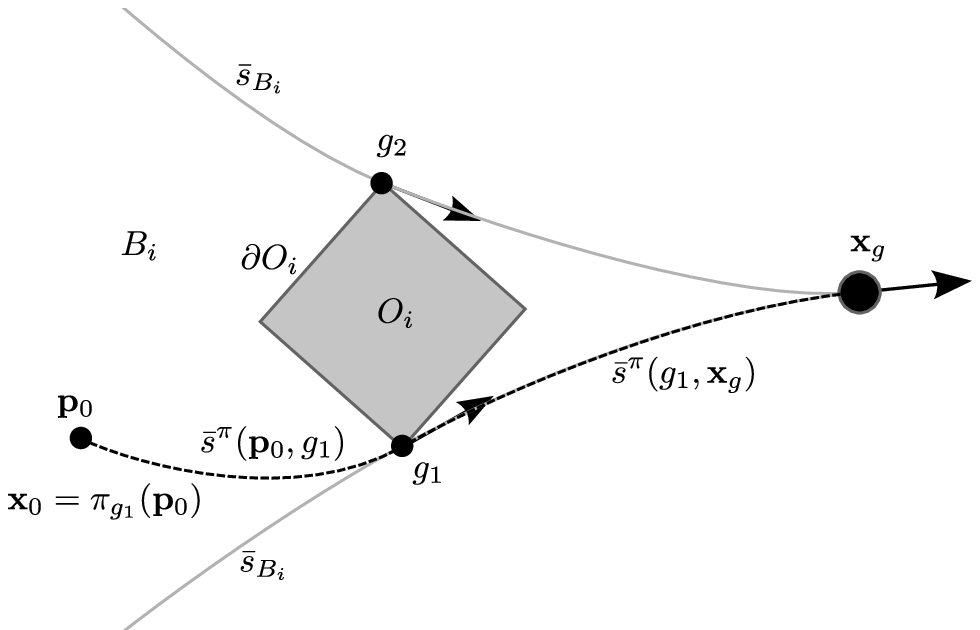}
  \label{fig:solution_traj}}
 \caption{Solution Trajectory}
 \label{fig:solution_trajectories_0}
\end{figure}

\begin{figure}[h]
\centering
\includegraphics[width=0.45\linewidth]{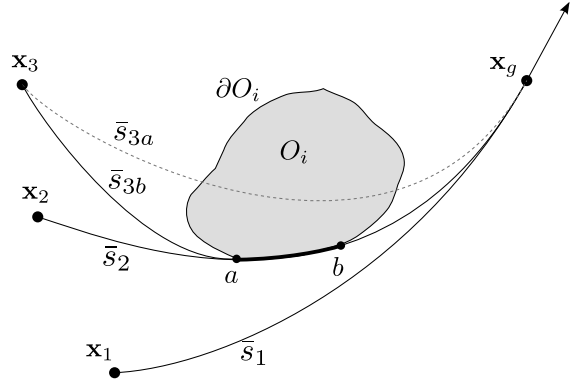}
\caption{The constrained optimal control problem setup, illustrating obstacle free, constrained, and mixed solutions}
\label{fig:constrained_guidance_diagram}
\end{figure}

\section{Structural Properties and Heuristics}
\label{sec:structural_properties}

The previous section showed how constraint transition points simplify the constrained optimal control problem. This section investigates transition point properties by defining bounding trajectories, which are evident in constrained optimal control solutions. These transition point properties are then used to infer necessary conditions for feasible motion planning subgoal candidates. Finally, the resulting subgoal properties are used to outline a subgoal planning procedure.

\subsection{Constraint Structure}
\label{sec:constraint_structure}

A constrained region $B_i(\mathbf{x}_g,\pi,O_i)$ contains all points $\mathbf{p}$ such that a trajectory from $\mathbf{p}$ to $\mathbf{x}_g$ using policy $\pi$, $\bar{s}^{\pi}(\mathbf{p},\mathbf{x}_g)$, does not satisfy $O_i$, as illustrated in Fig. \ref{fig:solution_traj}. The boundary of $B_i(x_g,\pi,O_i)$ is $S_{B_i}(\mathbf{x}_g,\pi,O_i)$, and consists of points $\mathbf{p}$ such that the trajectory $\bar{s}^{\pi}(\mathbf{p}_0,\mathbf{x}_g)$ intersects only the constraint boundary $\partial O_i$, i.e.:
\begin{itemize}
\item $\exists \mathbf{p} \in \bar{s}^{\pi}(\mathbf{p}_0,\mathbf{x}_g) \mid  c_i(\mathbf{p}) = 0$
\item $\forall \mathbf{p} \in \bar{s}^{\pi}(\mathbf{p}_0,\mathbf{x}_g) \mid  c_i(\mathbf{p}) \ge 0$
\end{itemize}
A trajectory $\bar{s}^{\pi}(\mathbf{p}_0,\mathbf{x}_g)$ is a bounding trajectory $\bar{s}_{B_i}(\mathbf{x}_g,\pi,O_i)$ if $\mathbf{p}_0 \in S_{B_i}(\mathbf{x}_g,\pi,O_i)$, i.e. if $\mathbf{p}_0$ is on the boundary of the constrained region. Conversely, all trajectories starting outside of the constrained region, $\bar{s}^{\pi}(\mathbf{p}_0,\mathbf{x}_g)$ for $\mathbf{p}_0 \notin B_i $, satisfy $O_i$.

\subsection{Necessary Conditions}

\begin{figure}[h]
\centering
\subfigure[Infeasible subgoal domains $g'_1 \in \mathcal{D}_1$, $g'_2 \in \mathcal{D}_2$, and $g'_3 \in \mathcal{D}_3$.]{
\includegraphics[height=5.0cm]{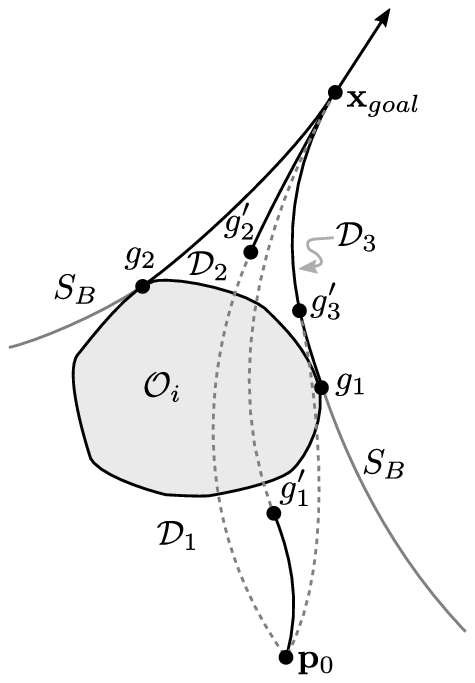}
\label{fig:necessary_conditions_1}}
\hfil
\subfigure[Suboptimal subgoal domains $g'_4 \in \mathcal{D}_4$ and $g'_5 \in \mathcal{D}_5$.]{
\includegraphics[height=5.0cm]{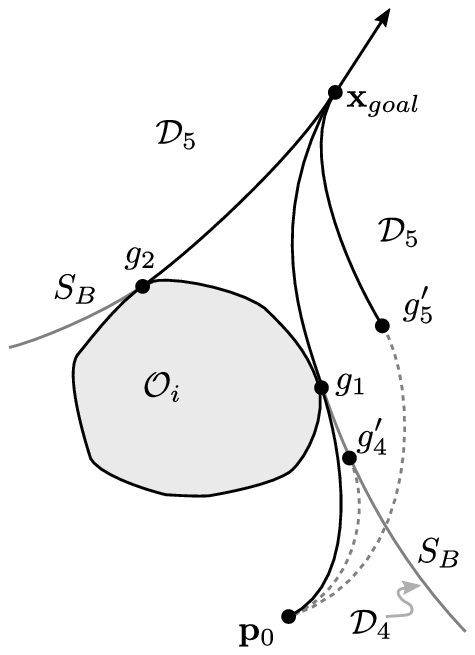}
\label{fig:necessary_conditions_2}}
\hfil
\subfigure[Continuous constraint boundary. Subgoal where $\pi(\mathbf{x})$ is perpendicular to $\nabla c(\mathbf{x})$]{
\includegraphics[height=5.0cm]{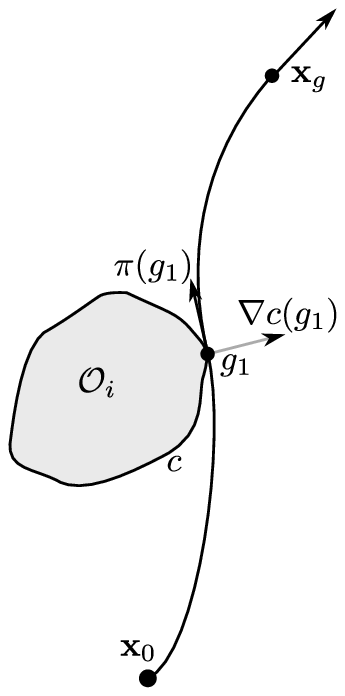}
\label{fig:continuous_boundary}}
\hfil
\subfigure[Piecewise-smooth constraint boundary. Subgoal where $\pi(\mathbf{x})$ is in range $ \left( \nabla c_1(\mathbf{x}), \nabla c_2(\mathbf{x}) \right) $]{
\includegraphics[height=5.0cm]{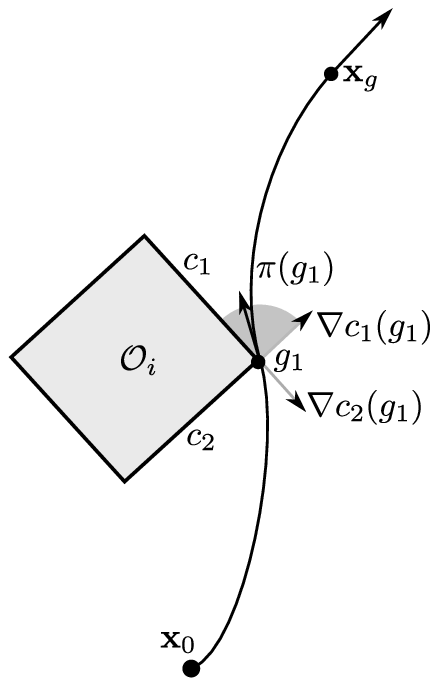}
\label{fig:piecewise_boundary}}
\caption{Example subgoal placement based on constraint geometry.}
\label{fig:subgoal_placement}
\vspace{-5mm}
\end{figure}


The constrained optimal control problem is now reformulated in terms of subgoals and bounding trajectories. The goal is to determine the sequence of subgoals, $\Gamma(g_0, g_n) = \{g_0, g_1, \dots, g_n\}$ that specify the piece-wise optimal minimum-cost composite trajectory that satisfies constraints. We first consider the case of a single obstruction, $O_1$. The constrained optimal solution trajectory in this case contains a single subgoal, $\Gamma(\mathbf{x}_0, \mathbf{x}_g) = \{\mathbf{x}_0, g_1, \mathbf{x}_g\}$, as illustrated in Fig. \ref{fig:solution_traj}. The objective is to choose the optimal subgoal state $g^* \in \mathcal{X}$ such that total trajectory cost is minimized, and each segment satisfies constraints:
\begin{eqnarray}
\label{eqn:optimal_subgoal}
g_1 = \argmin_{g \in \mathcal{X}} J^*_{g}(\mathbf{x}_0)) + J^*_{\mathbf{x}_g}(g)) \\
\text{such that }\bar{s}(\mathbf{x}_0,g) \notin \mathcal{O}_E \wedge \bar{s}(g,\mathbf{x}_g) \notin \mathcal{O}_E \nonumber
\end{eqnarray}
Eqn. \ref{eqn:optimal_subgoal} is a minimization over a continuous state-space domain, $\mathcal{X}$. To solve this efficiently, properties of the constrained optimal control problem given in Sec. \ref{sec:theoretical_approach} are used to state necessary conditions that reduce the subgoal candidate domain, $\mathcal{G} \subset \mathcal{X}$, enabling efficient determination of the optimal subgoal.

\begin{ncdef}
The subgoal candidate set $\mathcal{G}$ is the intersection of bounding trajectories and constraint boundaries:
\begin{equation}
\mathcal{G}(\mathbf{x}_g, \pi, O_i) := S_B(\mathbf{x}_g, \pi, O_i) \cap \partial O_i
\label{eqn:admissible_subgoals_2}
\end{equation}
\end{ncdef}

\begin{nclem}
Given the constrained optimization problem outlined above $\langle \mathbf{x}_g, \pi, O_i \rangle$, the optimal subgoal point $g^*$ defining the minimum-cost composite solution trajectory $\bar{s}^{\pi}(\mathbf{p} \in B_i, g^*, \mathbf{x}_g)$ is contained within the set of subgoal candidates for obstacle $O_i$,  i.e. $g^* \in \mathcal{G}(\mathbf{x}_g, \pi, O_i)$.
\label{lem:optimal_subgoal_candidate}
\end{nclem}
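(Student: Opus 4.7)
The plan is to prove the containment $g^* \in S_B \cap \partial O_i$ by contradiction in two stages, showing that any candidate optimal subgoal violating either membership condition can be perturbed to reduce the composite cost while preserving feasibility. The argument leverages Lemma 1 (total piecewise-optimal cost bounded below by the unconstrained cost), continuity of the free-space trajectory map $\Pi_{\mathbf{x}_g}$ in its initial condition, and the Pontryagin tangency condition from Section 3.1 (active-constraint points occur only where the optimal velocity is tangent to $\partial O_i$).

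First I would establish $g^* \in S_B$. Feasibility of the second segment requires $g^* \notin B_i$. If in addition $g^*$ lies strictly outside $\overline{B_i}$, then by continuity of $\Pi_{\mathbf{x}_g}$ there is a neighborhood of $g^*$ consisting entirely of feasible subgoals, and on this neighborhood the composite cost $J(\bar{s}^\pi(\mathbf{p}, g)) + J(\bar{s}^\pi(g, \mathbf{x}_g))$ is smooth. Its unconstrained stationary points would lie along the unconstrained optimal trajectory from $\mathbf{p}$ to $\mathbf{x}_g$ — but since $\mathbf{p} \in B_i$ that trajectory crosses $O_i$, so every such stationary point sits inside $B_i$ and is infeasible. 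Consequently the local gradient at $g^*$ is nonzero and points toward $B_i$, so a small perturbation of $g^*$ in that direction (still within the feasibility neighborhood) strictly decreases the composite cost, contradicting optimality. Hence $g^* \in \partial B_i = S_B$.

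Second, assuming $g^* \in S_B$, I would establish $g^* \in \partial O_i$. Since $g^* \in S_B$, the second segment $\bar{s}^\pi(g^*, \mathbf{x}_g)$ is a bounding trajectory meeting $\partial O_i$ at a contact point $g_c$. If $g_c \neq g^*$, the portion $\bar{s}^\pi(g^*, g_c)$ lies strictly in $\mathcal{F}$. Parameterizing $S_B$ locally by the arclength of its contact point along $\partial O_i$, I slide $g^*$ along $S_B$ toward the pre-image of $g_c$: this preserves feasibility of the second segment (it remains a bounding trajectory), shortens the first segment (since the straight-line target on $\partial O_i$ from $\mathbf{p}$ is closer than $g^*$), and strictly reduces total cost, contradicting optimality. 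In the limit the sliding terminates when $g^*$ coincides with its own contact point, giving $g^* \in \partial O_i$ and recovering precisely the Pontryagin tangency condition at the subgoal. Combining the two stages yields $g^* \in S_B \cap \partial O_i = \mathcal{G}$.

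The main obstacle will be justifying the sliding step rigorously: it requires $S_B$ to be locally smooth and monotonically parameterizable by its contact point along $\partial O_i$, which in turn depends on smoothness of both $\pi$ and $\partial O_i$. At piecewise-smooth obstacle corners (as in Fig. \ref{fig:piecewise_boundary}) the tangency condition must be generalized to require $\pi(g^*)$ to lie in the convex cone spanned by $\nabla c_1(g^*)$ and $\nabla c_2(g^*)$, and the continuous sliding argument must be replaced by a finite comparison over admissible corner subgoals. The perturbation argument for $g^* \in S_B$ is expected to be routine given continuity of the trajectory map, but the second stage is where the technical work concentrates.
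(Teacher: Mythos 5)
Your route is genuinely different from the paper's: you argue variationally (first-order stationarity forces $g^*$ onto $\partial B_i = S_B$, then a sliding deformation forces it onto $\partial O_i$), whereas the paper partitions all alternative subgoal locations into five regions $\mathcal{D}_1,\dots,\mathcal{D}_5$ and disposes of each by a feasibility or domination argument. The decomposition of the target into $g^*\in S_B$ and $g^*\in\partial O_i$ is sound, and your corner generalization matches the paper's Eqn.~(\ref{eqn:velocity_vector_condition_corner}). However, both stages have a concrete gap. In Stage 1, your ``neighborhood of feasible subgoals'' accounts only for the second segment (feasibility of $\bar{s}^{\pi}(g,\mathbf{x}_g)$, i.e.\ $g\notin B_i$), but the first segment $\bar{s}^{\pi}(\mathbf{p},g)$ imposes its own constraint on $g$: since $\mathbf{p}\in B_i$, the optimal $g^*$ can sit strictly outside $\overline{B_i}$ with the first leg grazing $O_i$ as the active constraint, in which case the nonzero gradient is balanced against that constraint and no feasible descent direction need exist. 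This is precisely the paper's region $\mathcal{D}_5$, which it eliminates not by a gradient argument but by observing that any such composite path must cross the bounding trajectory, contradicting the principle that optimal trajectories to a common goal cannot intersect. Your proposal has no analogue of that crossing argument, so Stage 1 as written does not close.

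In Stage 2, the monotonicity claim that sliding $g^*$ along $S_B$ toward the contact point $g_c$ ``shortens the first segment'' substitutes straight-line distance for policy-generated cost and, as you acknowledge, is the hard part --- but it is also unnecessary. The tool you already cite, Lemma~\ref{lem:composite}, dispatches this case in one step: a subgoal $g^*$ on the bounding trajectory upstream of $g_c$ yields the composite $\mathbf{p}\to g^*\to g_c\to\mathbf{x}_g$, and by Lemma~\ref{lem:composite} the two-leg portion $\mathbf{p}\to g^*\to g_c$ costs at least as much as the direct piecewise-optimal segment $\mathbf{p}\to g_c$, so the contact point dominates without any continuous deformation or smoothness hypothesis on $S_B$. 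This is exactly the paper's treatment of region $\mathcal{D}_4$. You should also note that your slide does not obviously preserve feasibility of the \emph{first} segment at intermediate positions, another reason to prefer the one-shot replacement. In short: keep your two-stage skeleton if you like, but replace the sliding argument with the Lemma~\ref{lem:composite} domination, and add a no-crossing argument (or an explicit treatment of the first-segment-active case) to repair Stage 1.
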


\begin{proof}
Lemma \ref{lem:optimal_subgoal_candidate} is proven by showing that any other subgoal locations $ g' \notin \mathcal{G}$ are either infeasible or have a higher cost than at least one subgoal candidate in $\mathcal{G}$. To show this, the spatial domain is divided into regions $\mathcal{D}_1,\dots, \mathcal{D}_5$ surrounding each subgoal candidate, $g_i \in \mathcal{G}$, as shown in Figs. \ref{fig:necessary_conditions_1} and \ref{fig:necessary_conditions_2}. First, Fig. \ref{fig:necessary_conditions_1} illustrates how subgoal locations within the bounded region, both behind the obstacle ($\mathcal{D}_1$), in front of the obstacle ($\mathcal{D}_2$), or on the boundary in front of the subgoal candidate ($\mathcal{D}_3$) are infeasible because there is either no free optimal path from the subgoal to the goal, or no feasible path from the start to the subgoal.

The remaining feasible subgoal locations occur either along the bounding trajectory behind the subgoal candidate ($\mathcal{D}_4$), or outside of the bounding trajectory ($\mathcal{D}_5$). If the subgoal is placed along the bounding trajectory behind the subgoal candidate ($\mathcal{D}_4$), then this solution is equivalent to a trajectory with two subgoals: $\bar{s}^{\pi}(\mathbf{p}_0, g'_4, g_1, \mathbf{x}_{goal})$ as shown in Fig. (\ref{fig:necessary_conditions_2}). In this case, based on Lemma \ref{lem:composite}, the segment from $\mathbf{p}$ to $g_1$ can be replaced with a single guidance segment with equal or lesser cost by eliminating subgoal $g'_4$.

Finally, any subgoal in $\mathcal{D}_5$ specifies a solution that intersects the bounding trajectory, as illustrated in Fig. \ref{fig:necessary_conditions_2}. At this intersection point, two paths to the goal are available. Since any sub-segment of an optimal trajectory must also be optimal, an intersection point presents a contradiction of the optimality principle. Any subgoal that defines a path that intersects the optimal bounding trajectory therefore must have a higher cost from the intersection point to the goal, and cannot be optimal.
\end{proof}
Note that based on the principle of optimality, subgoal candidates $g_i \in \mathcal{G}$ have the property of defining solution trajectories that are both feasible and do not intersect any other feasible, optimal trajectories to the goal. Lemma 1 leads to the following conditions defining feasible subgoal candidates:
\begin{nccon}
Subgoal candidates lie on constraint boundaries: $\mathcal{G} \subset \partial O_i$.
\label{cond:constraint_boundary}
\end{nccon}
\begin{nccon}
The velocity specified by the VVF associated with $\pi$ at a subgoal candidate must be tangent to the obstacle boundary.
\begin{equation}
\label{eqn:nc_2}
\mathcal{G} = \{ g | c_i(g_i \pm \epsilon \cdot \pi(g_i)) > 0 \}
\end{equation}
\label{cond:tangent_velocity}
\end{nccon}
Conditions \ref{cond:constraint_boundary} and \ref{cond:tangent_velocity} are illustrated in Fig. \ref{fig:continuous_boundary}. If a subgoal lies at a convex corner where two constraint boundary segments, $c_{1}$ and $c_{2}$ meet, then a subgoal is admissible over a range of velocity directions as illustrated in Fig. \ref{fig:piecewise_boundary}:
\begin{eqnarray}
\label{eqn:velocity_vector_condition_corner}
\nabla c_{i1} \cdot \pi(g_i) \ge 0 \vee \nabla c_{i1} \cdot \pi(g_i) \ge 0\\
\nabla c_{i2} \cdot -\pi(g_i) \ge 0 \vee \nabla c_{i2} \cdot -\pi(g_i) \ge 0 \nonumber
\end{eqnarray}

\subsection{Solution Subgoal Example}
\label{sec:subgoals}
The constrained optimal control problem with a single obstruction is illustrated in Fig. \ref{fig:solution_traj}. First, note that the direct trajectory between $\mathbf{x}_0$ and $\mathbf{x}_g$ using the optimal policy, $ \bar{s}(\mathbf{x}_0,\mathbf{x}_g) = \Pi_{\mathbf{x}_g}(\mathbf{x}_0)$ does not satisfy constraints: $\exists \mathbf{x} \in \bar{s}(\mathbf{x}_0,\mathbf{x}_g) \mid \mathbf{x} \in O_i$. Because the direct trajectory is infeasible, an alternate composite solution trajectory must be determined consisting of free or constrained segments. To find these segments, the conditions in Eqn. \ref{eqn:admissible_subgoals_2} are used to find a discrete set of admissible subgoals transition points, $\mathcal{G} = \{g_1, g_2\}$. In this case, it is assumed that the optimal trajectory consists of two segments, joined at a single subgoal. The optimal subgoal is chosen from $\mathcal{G}$ that minimizes the total trajectory cost based on Eqn. \ref{eqn:optimal_subgoal}. In Fig. \ref{fig:solution_traj}, $g_1$ is depicted as the optimal subgoal, and the optimal solution trajectory is illustrated as $\bar{s}^*_1(\mathbf{x}_0, g_1) \cup \bar{s}^*_2(g_1,\mathbf{x}_g)$.

In environments with multiple obstructions, the solution trajectory may require multiple subgoals. The optimization problem becomes that of selecting an optimal sequence of subgoals, $\Gamma_{\pi}^* = \{g^*_1, g^*_2, \dots, g^*_n\}$ that define a minimum cost path to the goal using trajectory segments generated by policy $\pi$. 
\begin{equation}
\Gamma_{\pi}^* = \argmin_{\Gamma_i \subset \mathcal{G}} J_{\pi}(\Gamma_i))
\end{equation}
The principle of optimality \cite{bellman1957dynamic} states that any sub-trajectory $\bar{s}(g_i, g_{i+k})$ for $g_i,g_{i+k} \in \Gamma$ must be an optimal trajectory between endpoint subgoals $g_i$ and $g_{i+k}$. Hence, the optimal subgoal sequence can be defined recursively:
\begin{equation}
\label{eqn:mult_subgoals_recursive}
\Gamma_{\pi}^*(g_0, g_n) = \lbrace \Gamma_{\pi}^*(g_0, g_{n-2}), \argmin_{g_{n-1} \in \mathcal{G}} J_{\pi}^*\left(\Gamma_{\pi}^*(g_0, g_{n-2}), g_{n-1}, g_n\right), g_n \rbrace
\end{equation}
Eqn. \ref{eqn:mult_subgoals_recursive} reduces the problem size by one, by solving a single-subgoal problem as presented in Eqn. \ref{eqn:optimal_subgoal}. In practice, a solution $\Gamma_{\pi}^*$ is found using dynamic programming.

\subsection{Additional Properties}
Given a guidance task with a single obstacle, $\langle \mathbf{x}_g, \pi, O_i \rangle$ and subgoal candidate set $\mathcal{G}$, the constrained region $B_i$ is further divided into partitions, such that all optimal trajectories beginning within a partition $ P(g_{j} \in \mathcal{G}) \subset B_i$ converge to subgoal $g_j$ on the optimal path to the goal $\mathbf{x}_g$. A partition $ P_j(g_{j}) $ is defined by bounding trajectories $ S_{B_i} $, and a separatrix, $ T_{B_i} $, which is a set of points for which the total trajectory cost is equal for two or more different subgoals $ g_i,g_j \in \mathcal{G}$:
\begin{equation}
T_{g_i,g_j} = \lbrace \mathbf{p} \mid J^{\pi}_{g_{i}}(\mathbf{p}) + J^{\pi}_{\mathbf{x}_g}(g_{i}) = J^{\pi}_{g_{j}}(\mathbf{p}) + J^{\pi}_{\mathbf{x}_g}(g_{j}) \rbrace
\end{equation}
Partition boundaries $\partial P = T_B \cup S_B$ are switching surfaces, separating regions in which unique subgoals are optimal. Separatrices $ T_B $ act as repelling manifolds, separating initial states that move toward different subgoals. Bounding trajectories $S_B$ are attracting manifolds, such that initial states on either side of the manifold result in nearly the same optimal trajectory. Importantly, partition sets define regions of local independence between initial position $\mathbf{p}_0 \in P$ and subgoal location $g^* \in \mathcal{X}$, forming an "information bottleneck" \cite{braun2010structure}, and allowing a static set of subgoals to be used across a task environment. Subgoal and partition properties are consistent with equivalence relations introduced by Kong and Mettler \cite{kong2009general}. The set of initial configurations in a partition $\mathbf{p}_0 \in P_i$ belong to the same subgoal equivalence class through subgoal $g_i$. By this equivalence, determining the partition that an initial state belongs to fully specifies the remaining trajectory to the goal. Furthermore, partitions across a task are related through the guidance equivalence relation, because they each represent a similar subtask of reaching a subgoal state.

Subgoals and the associated partitions discretize the task, transforming the continuous trajectory planning problem into a discrete planning problem, similar to road-map motion planning approaches. To use a graph-search algorithm, nodes must satisfy the Markov property \cite{russell2003artificial}. 
This property requires that each next subgoal cannot depend on any previous subgoal along the trajectory, but only on the current state. 
The subgoal properties above satisfy the Markov condition; the velocity vector of a subgoal $g_1$, $v_{g_1} = \pi(p_{g_1}, x_{g_2})$, depends on the next subgoal state through the guidance policy, and the subgoal position depends on the next subgoal state through the necessary conditions (Eqn \ref{eqn:admissible_subgoals_2}), but is independent of the path prior to that point.
Note that the Markov condition is consistent with the subgoal equivalence observed in human guidance behavior, stating that trajectories that meet at a subgoal remain together until they reach the final goal.
This subgoal structure results in partitions that are hierarchically included, each partition being a subset of a partition that is closer to the final goal.

\subsection{Stability}

\subsubsection{Approach}
System stability is important for both humans and computational systems to guarantee that a system converges to a target state \cite{khalil2002nonlinear}, even when a sub-optimal, satisficing solution is used. For the motion guidance task, the system must reach the target in a bounded time interval, $t \in \left[0, T\right]$. Moulay and Perruquetti present Lyupanov-based criteria for finite-time stability \cite{moulay2005lyapunov}:
\begin{equation}
T(\mathbf{x}_0) = \int_{V(\mathbf{x}_0)}^{0} \frac{d \xi}{\dot{V}(\bar{s}_{\mathbf{x}_g}(\mathbf{x}_0,\theta(\xi))} < + \infty
\label{eqn:finite_time_stability}
\end{equation}
In Eqn. \ref{eqn:finite_time_stability}, $T(\mathbf{x}_0)$ is the settling-time, or time-to-go of initial state $\mathbf{x}_0$. The mapping $\theta : V(\bar{s}_{x_g}(\mathbf{x}_0,t)) \rightarrow t$ is inverse cost, relating Lyuponov function value to settling time. In Eqn. \ref{eqn:finite_time_stability} the integrand expresses the differential time-to-go, $dt$, in terms of the Lyupanov function value dummy variable $\xi$. 
Eqn. \ref{eqn:finite_time_stability} can be used to demonstrate finite-time stability as follows. Assume a bounding function $g(\xi)$ exists such that $g(\xi) \in L^1\left(\left[0, \sup_{\mathbf{x} \in \mathcal{V}} V(\mathbf{x}) \right]\right)$. The system is finite-time stable if for all $\mathbf{x} \in \mathcal{V} - \{\mathbf{x}_g\}$, and all $\xi \in \left[0, V(\mathbf{x})\right]$:
\begin{equation}
\frac{-1}{\dot{V}((\bar{s}_{\mathbf{x}_g}(\mathbf{x}_0, \theta(\xi)))} \le g(\xi)
\label{eqn:finite_time_bounded}
\end{equation} 
As described in \cite{moulay2005lyapunov}, Eqn. \ref{eqn:finite_time_bounded} shows that when $\dot{V}(\mathbf{x}) \le -c(V(\mathbf{x}))^{\alpha}$, for  $c > 0$, $\alpha \in \left[0, 1\right]$, the settling time of an initial state $\mathbf{x}_0 \in \mathcal{V}$ is bounded by:
\begin{equation}
T(\mathbf{x}_0) \le \frac{V(\mathbf{x}_0)^{1-\alpha}}{c(1 - \alpha)}
\label{eqn:finite_time_bound}
\end{equation}
\subsubsection{Planning Stability}
The stability of a constrained subgoal planning problem, $\left\langle \mathbf{x}_0, \mathbf{x}_g, \mathcal{O}_E, \Pi \right\rangle$ is considered with respect to the evolution of a solution sequence of subgoals, $\Gamma = \{g_0, g_1, \dots \}$ with $g_0 = \mathbf{x}_0$. A Lyupanov function, $V(g_k) := J^*_{\pi}(\Gamma(g_k,g_n)) = \sum_{i=k}^{n-1} J^*_{\pi}(g_i, g_{i+1})$ is defined as the total cost incurred by the sequence of guidance elements connecting each pair of subgoals, $\langle g_i, g_{i+1} \rangle$ in sequence $\Gamma$. For this discrete-time subgoal transition process, Eqn. \ref{eqn:finite_time_stability} is expressed as:
\begin{equation}
T(g_0) = \sum_{g_k \in \Gamma} \frac{\Delta V(g_k)}{\Delta V(g_k) / \Delta T(g_k)} = \sum_{g_k \in \Gamma} \Delta T(g_k) < + \infty
\label{eqn:planning_stability}
\end{equation}
In Eqn. \ref{eqn:planning_stability}, $\Delta V$ is the change in Lyapunov value over the transition from subgoal $g_{k-1}$ to $g_k$, and $\Delta T$ is the corresponding change in time-to-go, i.e. $\Delta V / \Delta T = (V(g_k) - V(g_{k-1})/(T(g_k) - T(g_{k-1})$. Based on Eqn. \ref{eqn:planning_stability}, finite-time planning stability requires that both $\Delta T(g_k) < + \infty$ and $ \Delta V(g_k) < + \infty$ for all $g_k \in \Gamma$. Practically, these two conditions are met if the guidance policy is finite-time stable for each pair of subgoals in the plan, and if the plan reaches the goal using a finite number of subgoals. Planning stability is of greater concern for tasks in unknown environments, where the agent learns about subgoal connections and estimates subgoal cost as they move. For example, the agent must use a planning strategy that avoids entering a cycle. Guidance policy stability is addressed later in the paper.

\section{Planning Implementation}
\label{sec:implementation}
The previous section introduces elements needed to formulate the constrained optimal control problem as a graph search problem. These element include the necessary conditions specifying a discrete, finite set of optimal subgoal candidates, and the recursive approach to determining an optimal subgoal sequence defined in Eqn. \ref{eqn:mult_subgoals_recursive}. This section describes the subgoal graph planning approach. In this approach, graph nodes consist of admissible subgoal candidates, $\mathcal{G}$, and the guidance policy $\Pi$ provides feasible edges between nodes. 
An optimal graph search algorithm finds a sequence of subgoals, $\Gamma = \{\mathbf{x}_0, g_1, \dots, g_n\}$ that specify a piecewise-optimal solution trajectory.

\subsection{Subgoal Planning}
Edge costs for graph planning are often computed as the spatial distance between nodes, however in dynamic tasks, path cost also depends on velocity and other higher-order states. Incorporating these higher-order states causes node cost to become dependant on prior nodes along a path, breaking the Markov condition that ensures a correct solution.
In the proposed subgoal guidance approach, the guidance policy $\Pi$ maintains the Markov condition by restricting node velocity to a function of the next node state. The guidance policy thereby creates independence between the current node and any prior vehicle state, as long as the vehicle can track the reference velocity with sufficiently small error.
This approach creates a backward dependence between nodes, from the ultimate goal towards the start.

To account for the backwards dependence, backwards A* graph search \cite{ferguson2005guide} is used to determine solution subgoal sequences as described in Fig. \ref{alg:astar}.
Beginning at the goal node, the \textbf{getNeighbors} function in in Fig. \ref{alg:getneighbors} performs backward expansion, returning a list of subgoals from which subgoal $g$ can be reached using a feasible nominal trajectory, along with the cost associated with each trajectory. 
Trajectory feasibility and cost is computed in the \textbf{predictTrajectory} function, which simulates the trajectory $\bar{s}(\mathbf{x}_{start},\mathbf{x}_{goal})$ using guidance policy $\dot{\mathbf{x}} = \pi^*(\mathbf{x})$. 
The function \textbf{isFree($\bar{s}$)} determines the logical $I = \bar{s} \cap \mathcal{O}_E$, indicating whether the predicted trajectory $\bar{s}$ avoids obstacles.

During SGP execution, the majority of computation time is spent in the \textbf{predictTrajectory} function, so reducing the number of trajectory predictions is key to decreasing search time. Two strategies based on satisficing reduce the number of paths that must be simulated. 
First, \textbf{getNeighbors} considers only the $n_{limit}$ neighboring admissible subgoals with least heuristic costs. When $n_{limit} \ge N_{tot}$, all subgoals are considered, and the algorithm is optimal. 
When $n_{limit} < N_{tot}$, computation time is reduced, but optimal subgoals may be missed if the heuristic significantly underestimates actual cost. 
The second strategy is to prune out neighbor subgoal branches that are unlikely to be part of a minimum-cost path based on their incremental increase in cost from previously explored neighbors. 
During the node expansion phase, after \textbf{getNeighbors} expands the $n_{min}$ lowest-cost nodes in the open list, it only explores additional nodes that are within cost tolerance $\epsilon$ of the previously explored neighbor.
When $\epsilon = \infty$, subgoals are never pruned, and optimality is preserved. 


\begin{figure}
\begin{minipage}[t]{0.5\linewidth}
\begin{algorithm}[H]
 \label{alg:astar}
 \KwData{$\mathbf{x}_{goal}$,$\mathbf{x}_{start}$,$G_{all}$}
 \KwResult{$\Gamma_{\mathbf{x}_g}$}
 $G_{open} = $ PriorityQueue()\;
 $goal.value = \mathbf{x}_g$\;
 $goal.key = h(\mathbf{x}_{start},\mathbf{x}_{goal})$\;
 $\text{insert}(G_{open}, goal)$\;
 \While{$\mathbf{x}_{start} \notin G_{open}$}{
  $g_{node} = \text{popMin}(G_{open})$\;
  $G_{neighbors} = \text{getNeighbors}(g_{node})$\;
  \ForEach{$g_{i} \in G_{neighbors}$}{
   \If{$g_{i}.cost < G_{all}[g_i.id].cost$}{
    $G_{all}[g_i.id].cost = g_i.cost$\;
    $G_{all}[g_{i}.id].next = g_{node}$\;
    \uIf{$g_i \in G_{open}$}{
     $\text{updateKey}(G_{open}, g_i, g_i.cost)$\;
    }\Else{$\text{insert}(G_{open}, g_i, g_i.cost)$}
   }
  }
 }
 $g = \mathbf{x}_0$\;
 $\Gamma_{\mathbf{x}_g} = \{g\}$\;
 \While{$g.next \ne \emptyset$}{
 $g = G_{all}[g.next]$\;
 $\Gamma_{x_g} = \Gamma_{\mathbf{x}_g} \cup \{g\}$\;
 }
\end{algorithm}
\end{minipage}
\begin{minipage}[t]{0.5\linewidth}
\begin{algorithm}[H]
 \label{alg:getneighbors}
 \KwData{$g_{node}$, $G_{all}$, $E_{all}$, $\mathcal{O}_E$, $\Pi$}
 \KwResult{$G_{neighbors}$}
 $H = \text{PriorityQueue}()$\;
 $G_{edges} = \text{findEdgeSubgoals}(E_{all}, g_{node})$\;
 \ForEach{$g_{i} \in G_{all} \cup G_{edges}$}{
  $h_i = h(g_{start},g_i) + h(g_i,g_{node}) + g_{node}.cost$\;
  $g.value = g_i$\;
  $g.key = h_i$\;
  $\text{insert}(H, g)$
 }
 $G_{neighbors} = \emptyset$\;
 \While{$|G_{neighbors}|  < n_{limit}$}{
  $\{h^*_i, i^*\} = \text{popMin}(H)$\;
  \If{$|G_{neighbors}| < n_{min}$ or $h^*_i - h_{last} < \epsilon$}{
   $\bar{s}_{test} = \text{predictTrajectory}(g_{i^*},g_{node},\Pi)$\;
   \If{isFree($\bar{s}_{test}$, $\mathcal{O}_E$)}{
    $g_{i^*}.cost = g_{node}.cost + \text{cost}(\bar{s}_{test})$\;
    $G_{neighbors} = G_{neighbors} \cup g_{i^*}$\;
    \If{$g_i \notin G_{all}$}{$G_{all} = G_{all} \cup g_i$}
   }
  }
 }
\end{algorithm}
\end{minipage}
\par
\begin{minipage}[b]{0.5\linewidth}
\caption{backwardsAstar()}
\end{minipage}
\begin{minipage}[b]{0.5\linewidth}
\caption{getNeighbors()}
\end{minipage}
\end{figure}

\begin{figure}
\begin{algorithm}[H]
\label{alg:findedge}
\KwData{$E_{all}$,$g_{node}$}
\KwResult{$G_{edges}$}
$G_{edges} = \emptyset$\;
\ForEach{$E_i \in E_{all}$}{
 $(v_1, v_2) = E_i$.vertices()\;
 $\hat{e} = v_1 - v_2$\;
 $\psi_E = \tan^{-1}(-\hat{e}_1, \hat{e}_2)$\;
 $p(s) = v_1 + s * \hat{e}$\;
 $s^* = \text{solve}(\pi_{g_{node}}(p(s)) == \psi_E 
 \rightarrow s)$\;
 \If{$s^* \in \left[0,1\right]$}{
  $G_{edges}$.append($p(s^*)$)\;
 }
}
\end{algorithm}
\caption{getEdgeSubgoals()}
\end{figure}


Conditions for admissible subgoal candidates are defined in Eqn. \ref{eqn:admissible_subgoals_2}, based on constraint boundaries and velocity vector direction. When \textbf{getNeighbors} is called, it searches for two types of subgoal candidates. First, obstruction vertices are considered. In example cases presented in this paper, obstructions are convex, with piecewise linear boundaries. As shown in Fig. \ref{fig:piecewise_boundary}, obstruction vertices allow for a range of possible velocity vector directions that satisfy the necessary conditions. As a result, admissible subgoals almost always occur at obstruction vertices. Subgoals less frequently occur at points along continuous constraint boundary segments where the velocity vector specified by the nominal policy $v = \pi_{g}(\mathbf{x})$ is parallel to the constraint boundary, as in Fig. \ref{fig:continuous_boundary}. To accommodate this case, \textbf{getNeighbors} also checks each obstruction edge (\textbf{getEdgeSubgoals()}) to determine if a point along it satisfies Eqn. \ref{eqn:admissible_subgoals_2} and should be included as a subgoal candidate.

In this implementation, obstacles boundaries $\partial O$ are defined in the workspace, $p \in \mathcal{W}$, such that $c_i(p) = 0$. Spatial constraints in any real vehicle or robotic system however are a function of system configuration (C-space), i.e. $c_i(q) = 0$, because they depend on, for example, vehicle geometry and orientation, or robot arm joint configuration, in addition to just the end effector position. The assumption used in this implementation is that the C-space constraints can be tightly overbounded by workspace constraints. If $p(q)$ is a projection of a configuration $q = \{p, \psi\}$ into workspace point $p$: $c_i(p(q)) = c_i(q) + \epsilon$, where $0 \le \epsilon \le K$ for all configurations $q \in \mathcal{Q}$ and some constant $K$. In addition to system configurations, the tolerance $\epsilon$ takes into consideration obstacle clearance required due to the physical size of the robot, as well as any tolerance needed to ensure a safe trajectory while accounting for uncertainties in system control performance. As a result, subgoals placed on obstacle boundaries defined as above include the required offset or clearance required to allow a feasible, safe trajectory to pass through. 

\subsection{Vehicle Dynamics and Guidance Policy}
\label{sec:control_policy}

The subgoal guidance implementation uses a unicycle vehicle model with lateral and forward acceleration limits:

\begin{equation}
\left[ \dot{x} \, \dot{y} \, \dot{\psi} \, \dot{v}  \right]^T
=
\left[ v \cos \psi \,\, v \sin \psi \,\, \min(u_{lat}/v, \omega_{max}) \,\, f(v, u_{lon}) \right]^T
\label{eqn:vehicle_dynamics}
\end{equation}

This model was used in prior human guidance behavior investigations \cite{feit2015experimental, feit2016extraction} because it incorporates challenges typical in dynamic, human motion control tasks. For example, because turn rate is limited by forward speed, the agent must plan in advance to make successful turns.

This system has three configuration states, $x$, $y$, and $\psi$. The first two rows in Eqn. \ref{eqn:vehicle_dynamics} define a non-holonomic constraint on vehicle velocity state through vehicle heading, $\dot{y}/\dot{x} = \tan \psi$. In this guidance implementation, vehicle size is assumed to be small with respect to constraint dimensions. This assumption allows vehicle heading to be considered as an action that is freely controlled independent of constraints - i.e. on a constraint boundary, or at a subgoal, vehicle heading can take any value. 
%
\begin{figure}
\vspace{5mm}
 \centering
   \subfigure[Guidance geometry.]{\includegraphics[height=4.5cm]{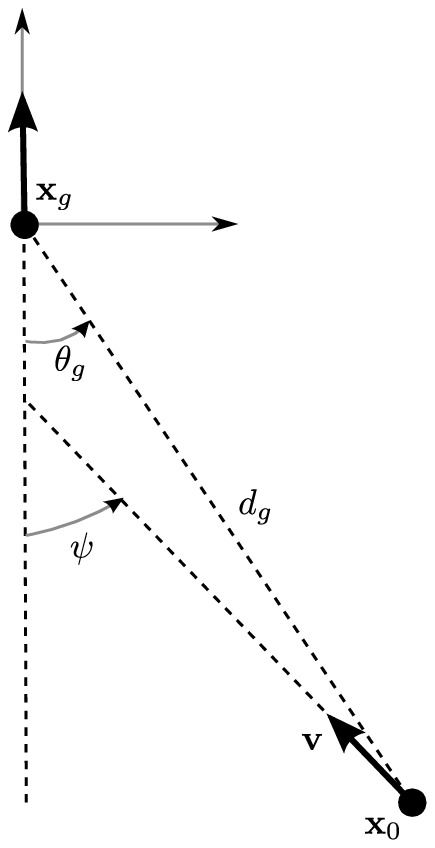}
    \label{fig:guidance_geometry}}
    \hfil
  \subfigure[Observed guidance behavior and perceptual guidance policy.]{\includegraphics[height=4.5cm]{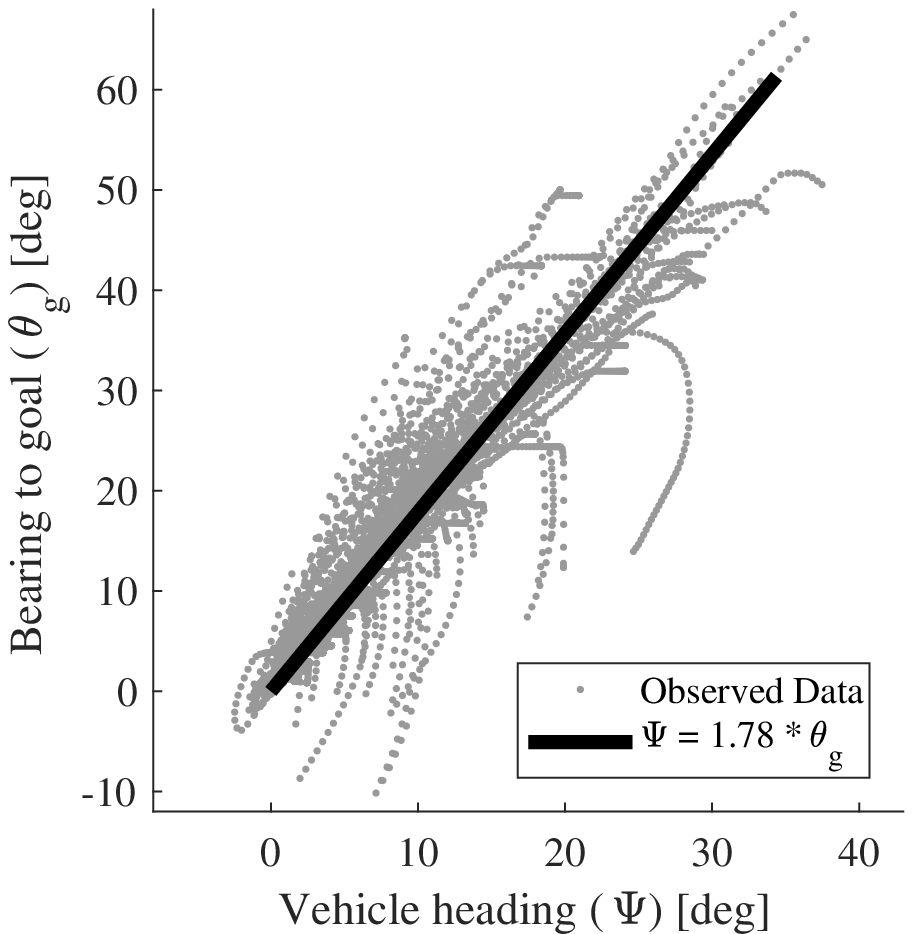}
    \label{fig:perc_guid_model}}
    \hfil
  \subfigure[Velocity vector field resulting from perceptual guidance policy.]{\includegraphics[height=4.5cm]{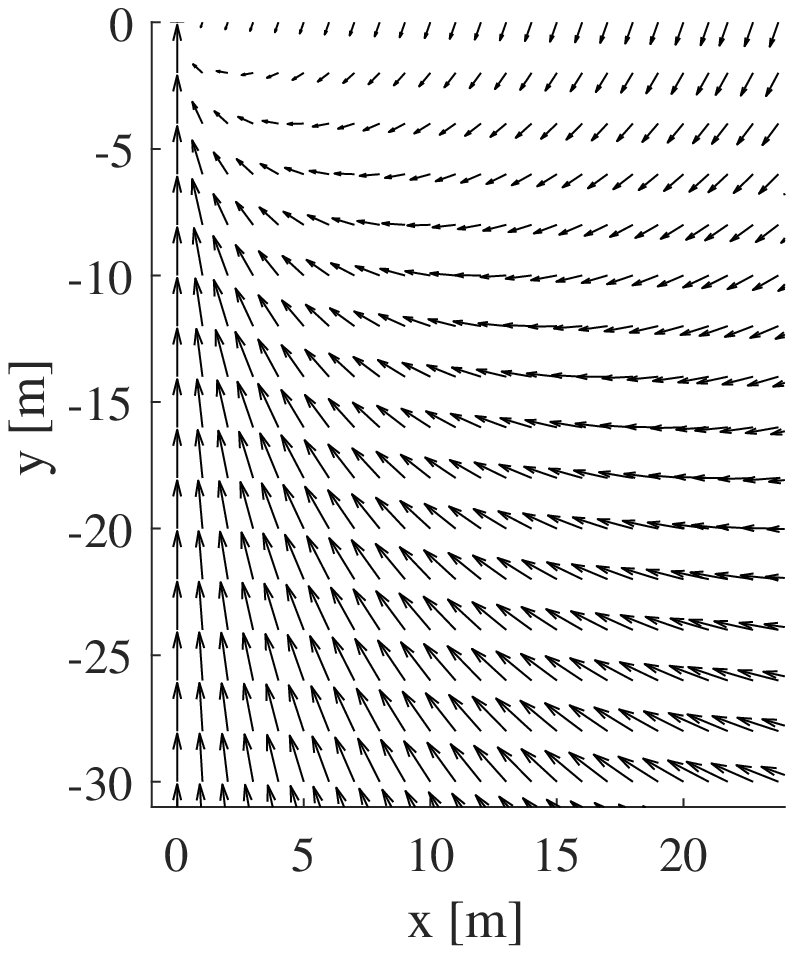}
  \label{fig:guidance_vvf}}
  \hfil
  \subfigure[Spatial cost-to-go map resulting from perceptual guidance policy.]{\includegraphics[height=4.5cm]{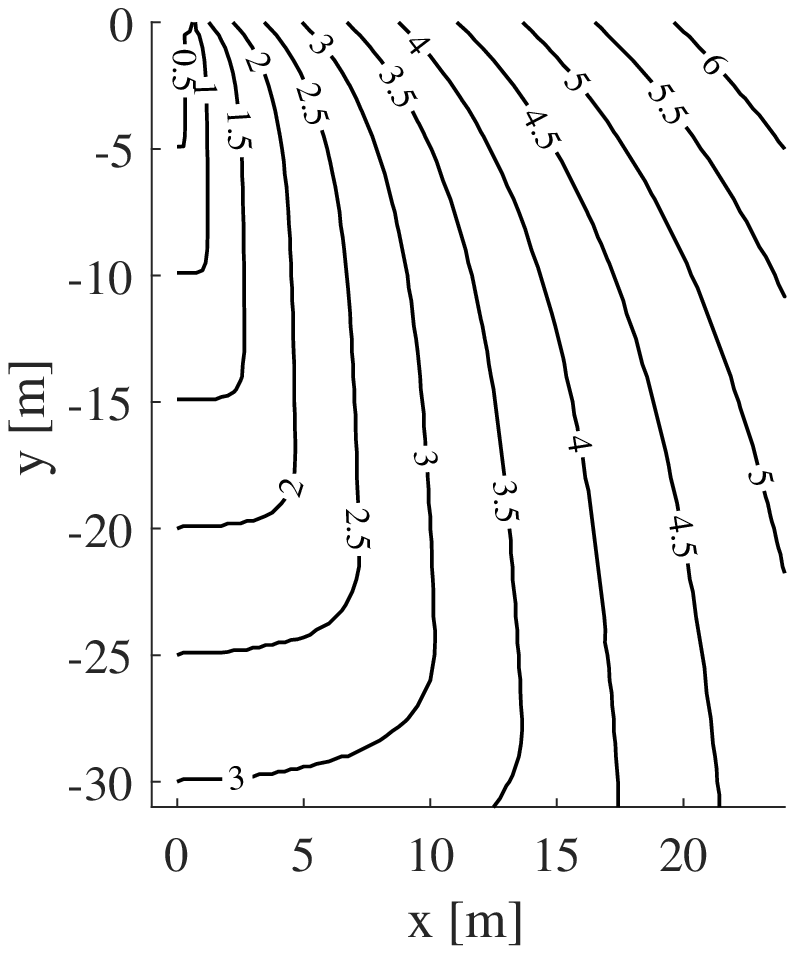}
   \label{fig:guidance_ctg}}
  \caption{Reference guidance policy.}
  \label{fig:reference_guidance_policy}
\end{figure}
During the planning process, \textbf{predictTrajectory} uses the nominal spatial guidance policy to generate reference trajectories between subgoals that satisfy the system dynamics in Eqn. \ref{eqn:vehicle_dynamics}. Rather than represent the policy as a spatial map \cite{mettler2013mapping, feit2016extraction}, the guidance function implements the policy as a feedback relationship between vehicle heading $\psi$ and bearing to subgoal $\theta_g$, as illustrated in the guidance geometry diagram, Fig. \ref{fig:guidance_geometry}.
This approach is based on the concept of perceptual guidance, suggesting that humans generate motion using simple relationships between perceived cue measurements and vehicle motion that approximate optimal behavior.
A feedback policy such as this reduces computational complexity by eliminating the need to evaluate VVF and CTG functions at each point along a trajectory.
The guidance policy is of the form $ \psi_{ref} = k * \theta_g $, where the perceptual quantity $\theta_g$ is the minimal set of relevant goal information needed to specify an action across equivalence classes. 
The VVF resulting from the feedback policy is:
\begin{eqnarray}
\mathbf{v}_{ref} = \pi_{pg}(\theta_g) &=& 
v_{mag}(\dot{\theta}_g)
\left[
\cos{k \theta_g} \; \sin{k \theta_g}
\right] ^T
\label{eqn:pg_policy}
\end{eqnarray}
In Eqn. \ref{eqn:pg_policy}, the reference velocity magnitude is based on the system lateral acceleration limit, and rate of change of vehicle heading: $v_{mag} = \max(v_{min}, a_{y,max} / k\dot{\theta_g})$. Eqn. \ref{eqn:pg_policy} acts as a set of simplified system dynamics, approximating feasible trajectories for the real vehicle system. This nominal policy is an example of an approach inspired by satisficing, since it a simplifies guidance as a sparse function of a single perceptual variable that is common across the task domain. Note that implementing this guidance policy additionally requires a tracking controller to generate control inputs to the vehicle that drive the system to follow the reference trajectory.

The nominal policy model is validated using observed human behavior recorded in prior work \cite{feit2015experimental, feit2016extraction}. Fig. \ref{fig:perc_guid_model} depicts a scatter of $\psi$ vs. $\theta_g$ recording during a simulated first-person driving task. The resulting scatter shows that a human subject's guidance behavior can be modeled as a linear policy function, approximated by Eqn. \ref{eqn:pg_policy} with $k = 1.78$, and shown by the regression line in Fig. \ref{fig:perc_guid_model}. Figs. \ref{fig:guidance_vvf} and \ref{fig:guidance_ctg} depict the resulting velocity vector function and spatial cost-to-go for this perceptual guidance policy. Spatial cost-to-go (CTG) is computed by integrating the guidance policy VVF from the initial state until the goal is reached. An important aspect of the policy is that it depicts the need to reduce speed to take sharp turns; at high angles to the goal, VVF magnitude is reduced, and CTG increases at a higher rate.

\begin{figure}
\centering
\includegraphics[height=4cm]{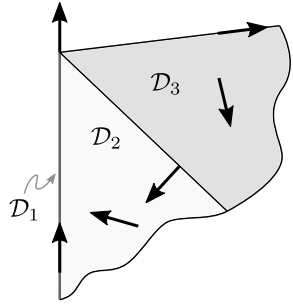}
\caption{Guidance policy stability domains.}
\label{fig:guidance_stability_domains}
\end{figure}

\subsection{Guidance Stability}

Planning-level task finite-time stability requires that a path exists consisting of a finite number of subgoals, and that trajectories between each pair of subsequent subgoals can be completed in finite time. The latter conditions requires finite time stability of the guidance policy. To show this, the perceptual guidance strategy in Eqn. \ref{eqn:pg_policy} is expressed in terms of distance to the goal and bearing error $\{d, \theta_g\}$:
\begin{eqnarray}
\left[ \dot{d} \; \dot{\theta}_g \right]^T &=& \left[ - v \cos (k \theta_g) \; -\frac{v}{d} \sin (k \theta_g) \right]^T
\end{eqnarray}

To verify finite-time stability, the guidance domain is divided into three subsets, $\mathcal{D}_1 = \left[0, \infty \right) \times 0$, $\mathcal{D}_2 = \left[0, \infty \right) \times \left(0, \pi/2k \right)$, and $\mathcal{D}_3 = \left[0, \infty \right) \times \left[ \pi/2k, \pi/k \right)$ as illustrated in Fig. \ref{fig:guidance_stability_domains}. When the system is in $\mathcal{D}_3$, the Lyapunov function $V = \theta_G$ ensures finite-time convergence into $\mathcal{D}_2$ by guaranteeing that $\dot{V} < -c$ when $\theta_G < \pi/k$. Within $\mathcal{D}_2$, the Lyupanov function $V = d + \theta_g$ is used, with derivative $\dot{V} = \frac{dV}{d \mathbf{x}_p}\dot{\mathbf{x}}_p =  - v (\cos (k \theta_g) + \frac{1}{d} \sin (k \theta_g))$.
Based on Eqn. \ref{eqn:finite_time_bounded}, the system is finite-time stable when $\dot{V} \le -c(V(\mathbf{x}_p))^{\alpha}$, resulting in:
\begin{eqnarray}
- v (\cos (k \theta_g) + \frac{1}{d} \sin (k \theta_g)) &\le& -c \left( d + \theta_g  \right)^{\alpha} \nonumber
\end{eqnarray}
Taking $\alpha = 0$, the system is finite-time stable when $c = v_{min} \min \left[ \cos (k \theta_g) + (1/d) \sin (k \theta_g) \right] > 0 $. This is conservatively satisfied when $v_{min} > 0$ and $0 < \theta_g < \pi / 2k$.
Finally, the system may begin in  $\mathcal{D}_1$, some distance from the goal but $\theta_G = 0$. In this case $V = d$ and $\dot{V} = -\dot{d} = -v_{min}$, and finite-time stability is ensured if vehicle minimum speed is lower-bounded.
Furthermore, settling time is bounded by $T(\mathbf{x}_0) \le V(\mathbf{x}_0) / c = (d + \theta_g) / c$, which is an admissible planning heuristic. Such a conservative bound allows stability to be robustly ensured over a range of guidance policies that may result from modeling errors, system failures, or environmental uncertainty.

\subsection{Computational Complexity}

Evaluating the computational complexity of SGP provides insight into the types of tasks for which SGP generates efficient results. The complexity of the A* search used in SGP depends on the heuristic quality: with no heuristic, complexity is $O(b^d)$ for average search depth $d$ and branching factor $b$ \cite{russell2003artificial}. When a heuristic is used that has log-bounded error, i.e. $\|h(\mathbf{x}) - h^*(\mathbf{x})\| \le \log h^*(\mathbf{x})$, search complexity becomes polynomial, and if the heuristic perfectly matches actual edges costs, complexity is linear, $O(d)$.

In the current application, graph edges are defined implicitly by the \textbf{findNeighbors()} function, therefore the branching factor must be determined empirically based on the number of expanded nodes and solution depth. The effective branching factor, $b^*$ is found by solving the equation:
\begin{equation}
N + 1 = 1 + b + b^2 + ... + b^d
\label{eqn:search_complexity}
\end{equation}
In Eqn. \ref{eqn:search_complexity}, $N$ is the total number of nodes expanded \cite{russell2003artificial}. 
The effective branching factor is therefore an empirical measure of fit between the heuristic and environment topology, $b^*=1$ being the ideal case when the heuristic specifies the exact actual cost of a path. In this work, we use a Euclidean distance heuristic, therefore the effective branching factor for a particular environment measures how the environment connectivity differs from linear, Euclidean paths. For example, a highly convoluted maze environment might result in a high branching factor due to the inaccuracy of approximating a path to the goal with a straight line. 

The subgoal properties defined in Sec. \ref{sec:structural_properties} provide additional limits on SGP computational complexity. 
When the guidance policy is optimal, based on Lemma \ref{lem:composite}, subgoals define the optimal composite path using the minimum-length node sequence, thereby minimizing graph search depth. In addition, the necessary conditions reduce branching factor through a-priori identification a subset of the task domain that are feasible path cost minima. These two conditions contrast with roadmap or random-sampling based planners that tend to over-discretize the environment to ensure that the graph provides sufficient coverage to obtain near-optimal solutions.

In the examples shown below, we set a constant maximum branching factor of $\epsilon$ to limit computational complexity to $O(\epsilon^d)$. This limit accounts for the conservatism of the heuristic that might otherwise result in a high branching factor, but also acts as a form of satisficing that humans may use to prune their decision tree when too many choices are available.


\section{Experimental Evaluation}
\label{sec:experiment}

\begin{figure}[h]
    \centering
    \subfigure[SGP Process Flow.]{\includegraphics[height=3cm]{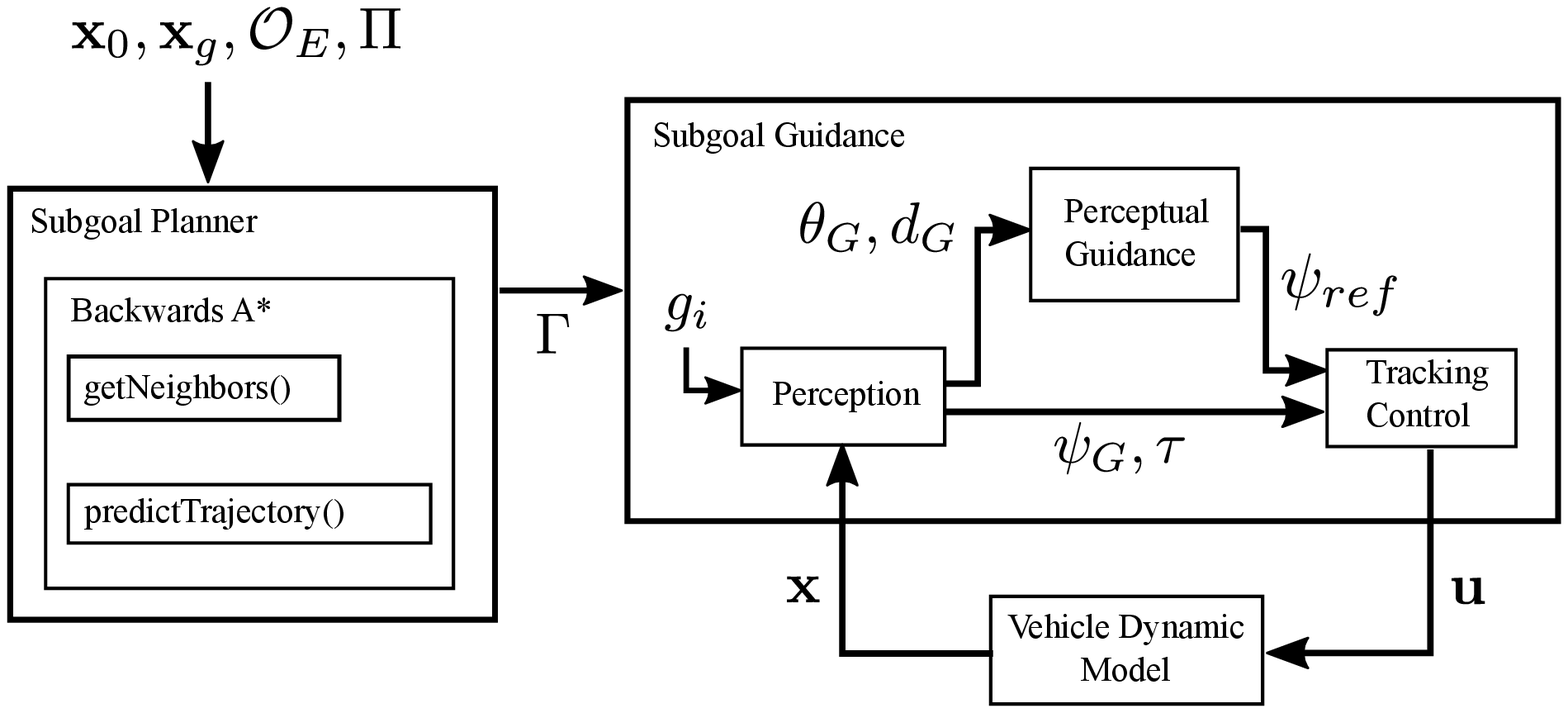}
    \label{fig:sgp_guidance}}
    \hfil
    \subfigure[RRT* Process Flow.]{\includegraphics[height=3cm]{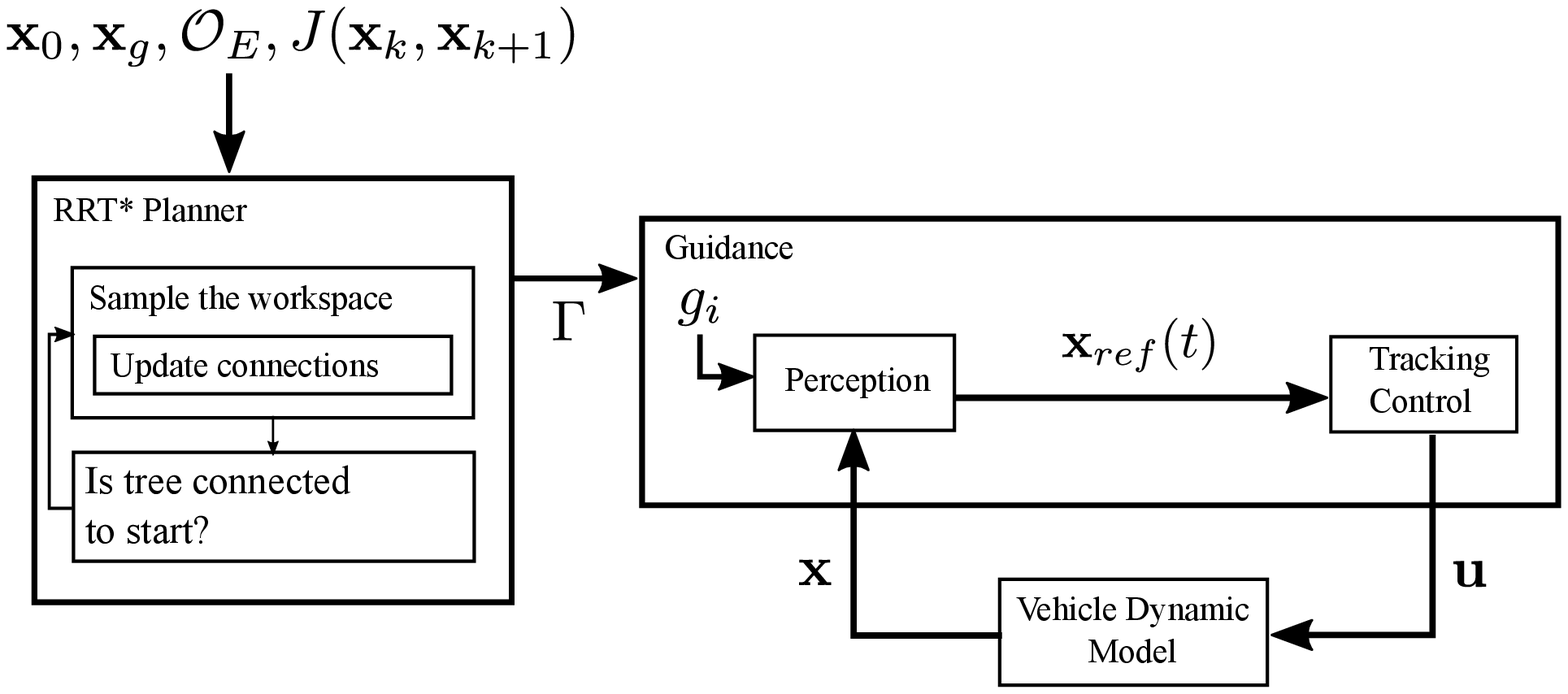}
    \label{fig:rrt_guidance}}
    \caption{RRT* and SGP Guidance Processes.}
    \label{fig:guidance_process}
\end{figure}

\subsection{Evaluation Approach}

Subgoal planning (SGP) performance and solution characteristics are evaluated by comparing results with those produced by a rapidly exploring random tree (RRT*) planner \cite{karaman2010optimal, lavalle1998rapidly}. Fig. \ref{fig:guidance_process} gives an overview of the two processes.
The RRT* planner in Fig. \ref{fig:rrt_guidance} samples locations in the task workspace, $\mathbf{p} \in \mathcal{W}$, and forms rectilinear connections that minimize the overall path cost. The cost of each RRT* edge, approximating the actual travel-time of the unicycle vehicle, is computed as the following:
\begin{equation}
J(d_G, \theta_G) = d_G/v_{max} + c \theta_G^3
\label{eqn:rrt_cost_function}
\end{equation}
In Eqn. \ref{eqn:rrt_cost_function}, for a node $\mathbf{x}_k = [ \mathbf{p}_k, \mathbf{v}_k ]$, segment length is $d_G = ||\mathbf{p}_{k+1} - \mathbf{p}_k||$, and $ \theta_G $ is the turning angle between vectors $\mathbf{v}_k$ and $\mathbf{v}_{k+1}$ (similar to \cite{feit2010travel}). 
The number of samples $k$ used by the RRT* planner can be adjusted to trade-off solution performance for computation time. This adjustment allows RRT* to be tuned to provide the best comparison with SGP. For a specific RRT* run, if the planner cannot compute a solution with the initial $ k $ samples, RRT* generates additional samples until a solution is found, so actual CPU time varies. 

To compare solution performance, trajectories are simulated by connecting a feedback controller to the unicycle model such that it tracks the reference position and heading computed by each planner.
\begin{eqnarray}
\begin{bmatrix}
\dot{u}_{lat,i} \\
\dot{u}_{lon,i} \\
u_{lat} \\
u_{lon}
\end{bmatrix}
=
\begin{bmatrix}
\psi_{err} \\
a(u_{lat}) \\
k_1 u_{lat,i} + k_2(v) \psi_{err} \\
k_3 u_{lon,i} + k_4 a(u_{lat})
\end{bmatrix}
\label{eqn:tracking_controller}
\end{eqnarray}
The controller in Eqn. \ref{eqn:tracking_controller} includes integral and proportional feedback, with control gains $k_1 = k_3 = 1.0$, $k_2(v) = 10 + 6v$, and $k_4 = 0.4$. The term $a(u_{lat}) = 2.5 (|u_{lat}| - 0.6)$ is chosen to enable the controller to reduce speed during turns.
Each planning solution is associated with both a planned cost: the sum of the planned edge costs between nodes or subgoals, and an actual cost: the time the unicycle tracker takes to reach the goal following the solution path.

\begin{figure}[tbph]
\vspace{5mm}
\subfigure[RRT* path planning result: 170 node samples, 5 trajectories at each start location.]{\includegraphics[height=4.25cm]{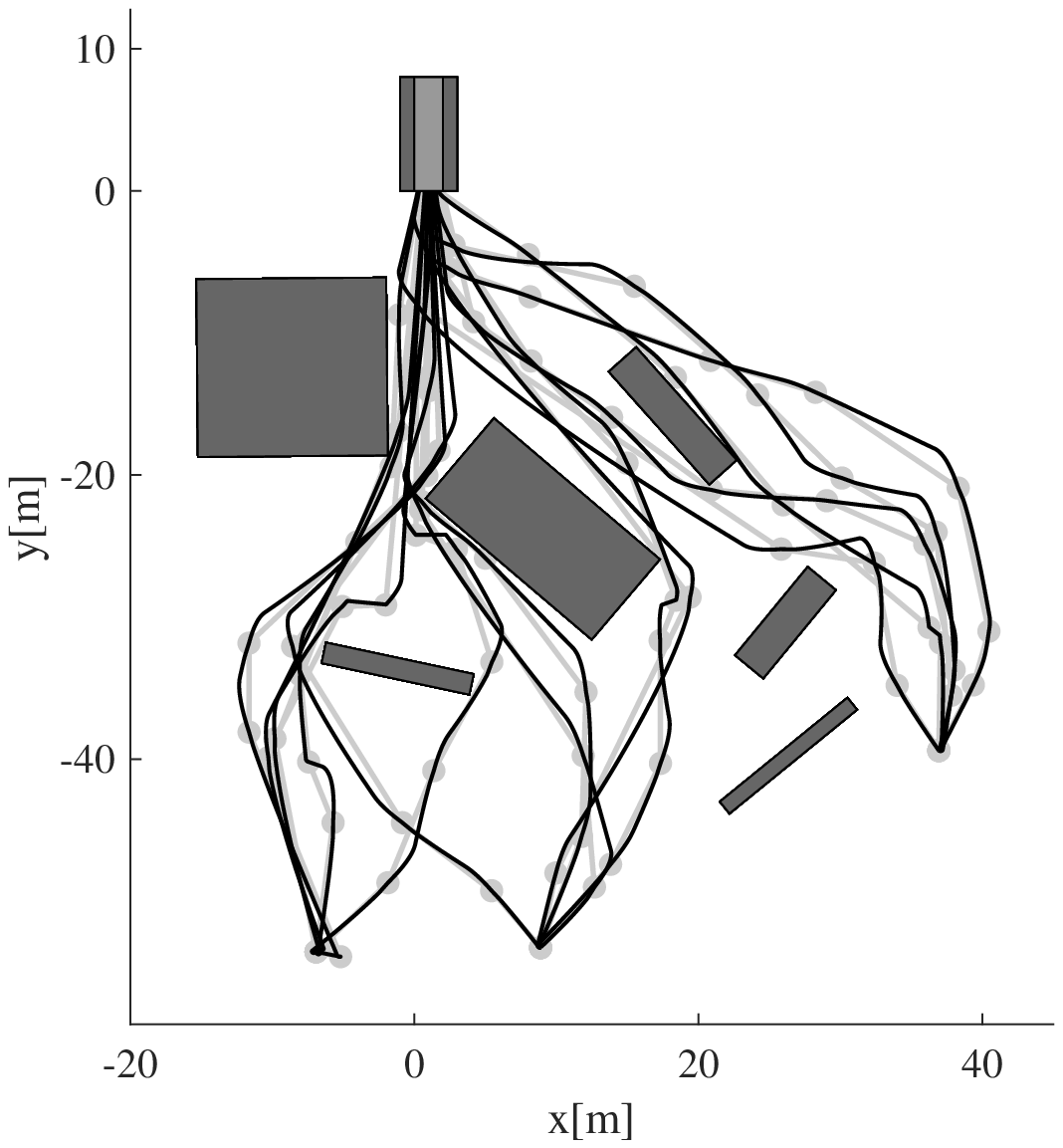}
\label{fig:rrt_example_trajectories}}
\hfil
\subfigure[SGP path planning results: $\epsilon=$ 10 sec and $n_{limit}$ = 5.]{\includegraphics[height=4.25cm]{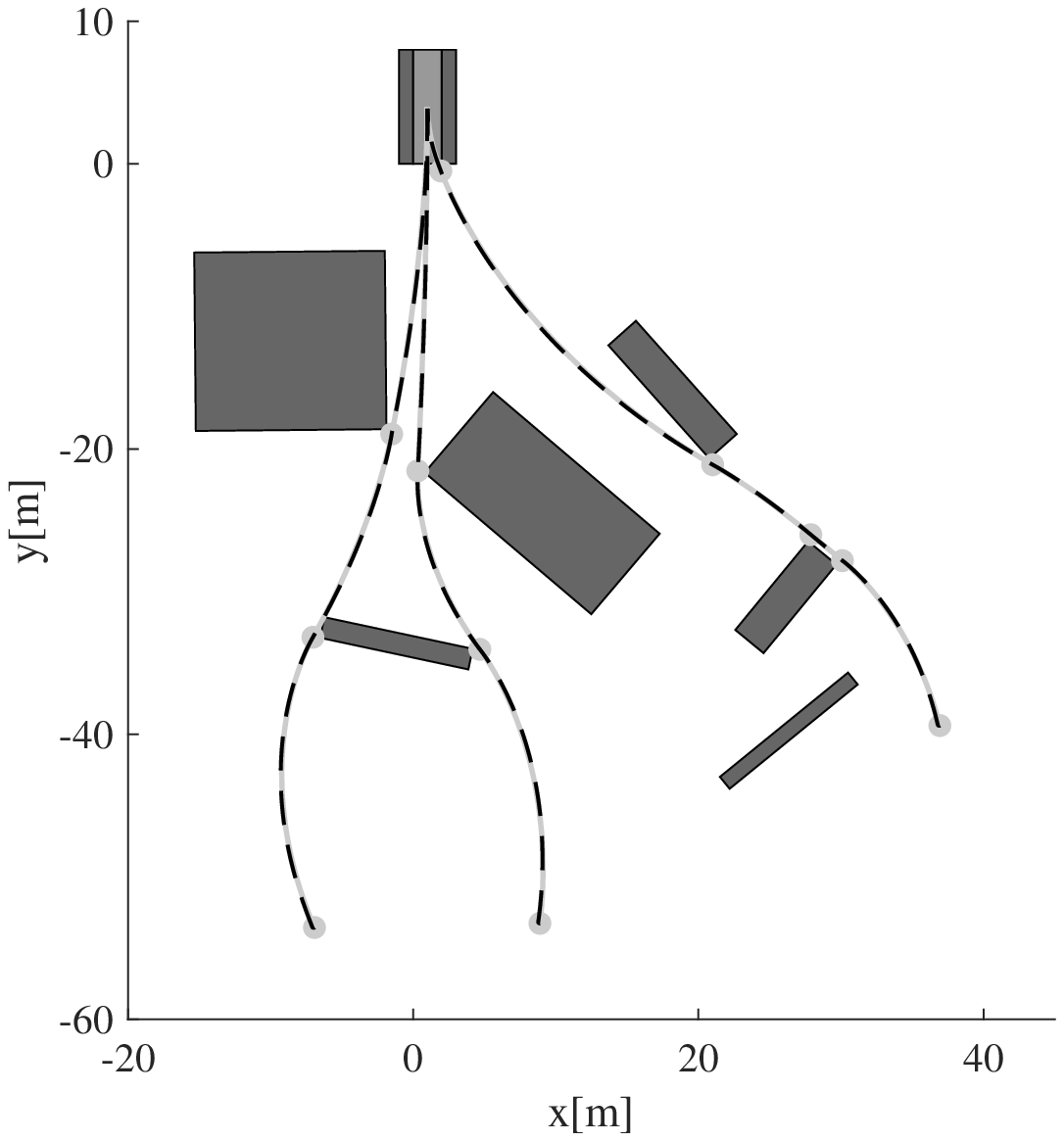}
\label{fig:sgo_example_trajectories}}
\hfil
\subfigure[RRT* cost and CPU time relative to SGP.]{\includegraphics[height=4.25cm]{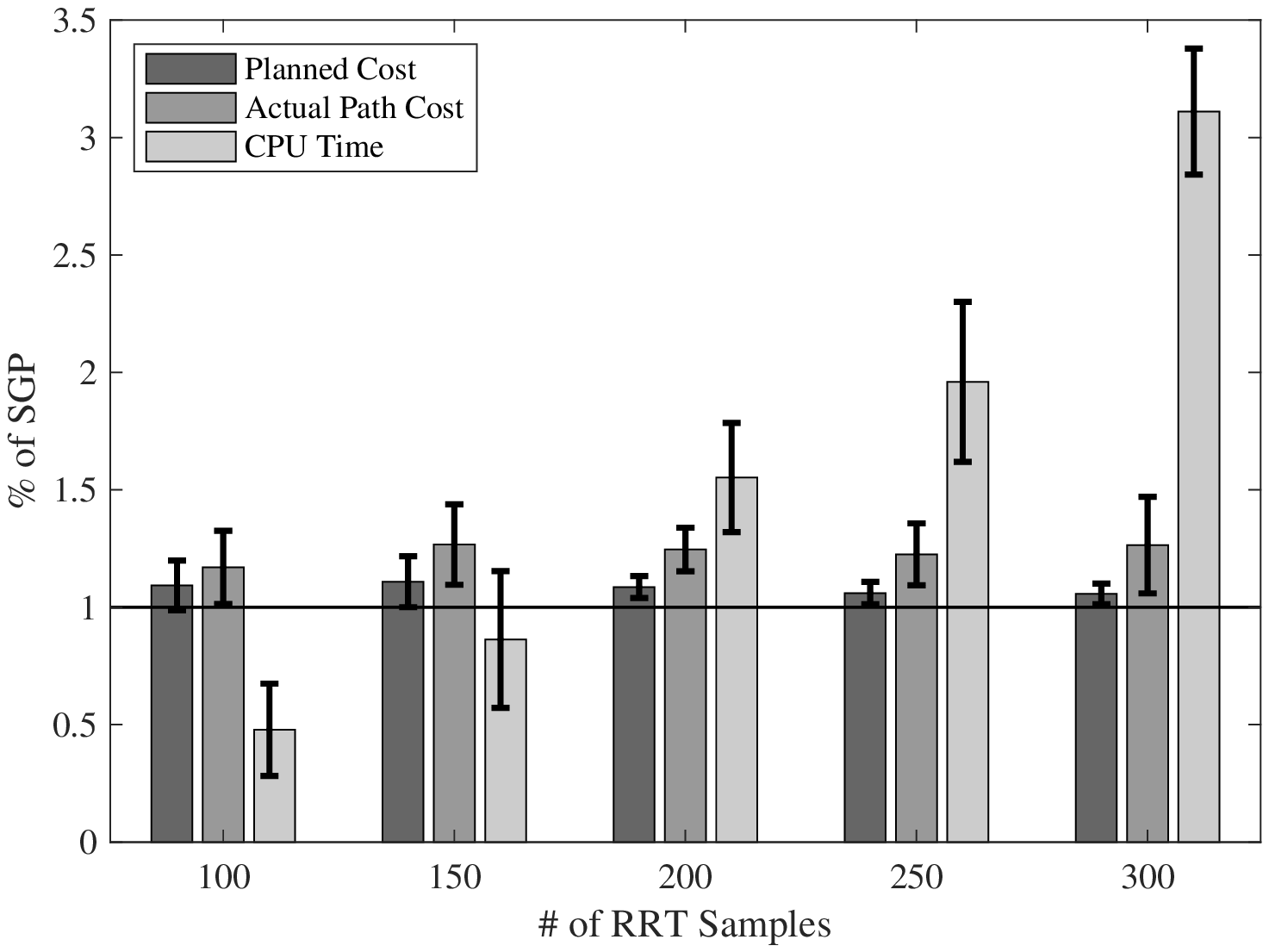}
\label{fig:performance_comparison}}
\caption{RRT* vs. SGP performance and planning time comparison for three selected starting locations.}
\label{fig:rrt_sgp_comparison_single}
\end{figure}

\subsection{Planning Performance Comparison}
Planning computation time is evaluated by generating a series of solutions from three selected starting points within the uniform-obstruction course (as used in \cite{feit2016extraction}). Fig. \ref{fig:rrt_example_trajectories} shows 15 RRT* solutions, each using 200 node samples. Fig. \ref{fig:sgo_example_trajectories} shows the SGP solutions for the same start locations. Fig. \ref{fig:performance_comparison} compares RRT* planned cost, actual path cost, and processing (CPU) time for these paths compared to SGP, over a range of node sample quantities. 

Results show that RRT* cpu time increases with sample quantity as expected, matching SGP cpu time at close to 170 samples. 
The planned cost decreases gradually with sample quantity, and is about 10\% greater (slower) than SGP for equal cpu time. Actual path cost remains at about 25\% above SGP path cost. The hypothesis is that the RRT* path actual cost remains high because the tracker generates many abrupt control actions in response to the short path segments in the RRT* solution. At 300 samples, RRT* takes over 3 times the cpu time as SGP, but average planned and actual path costs still greater than SGP.

\begin{figure}[tbh]
\vspace{5mm}
\centering
\subfigure[RRT* example trajectories.]{\includegraphics[height=5cm]{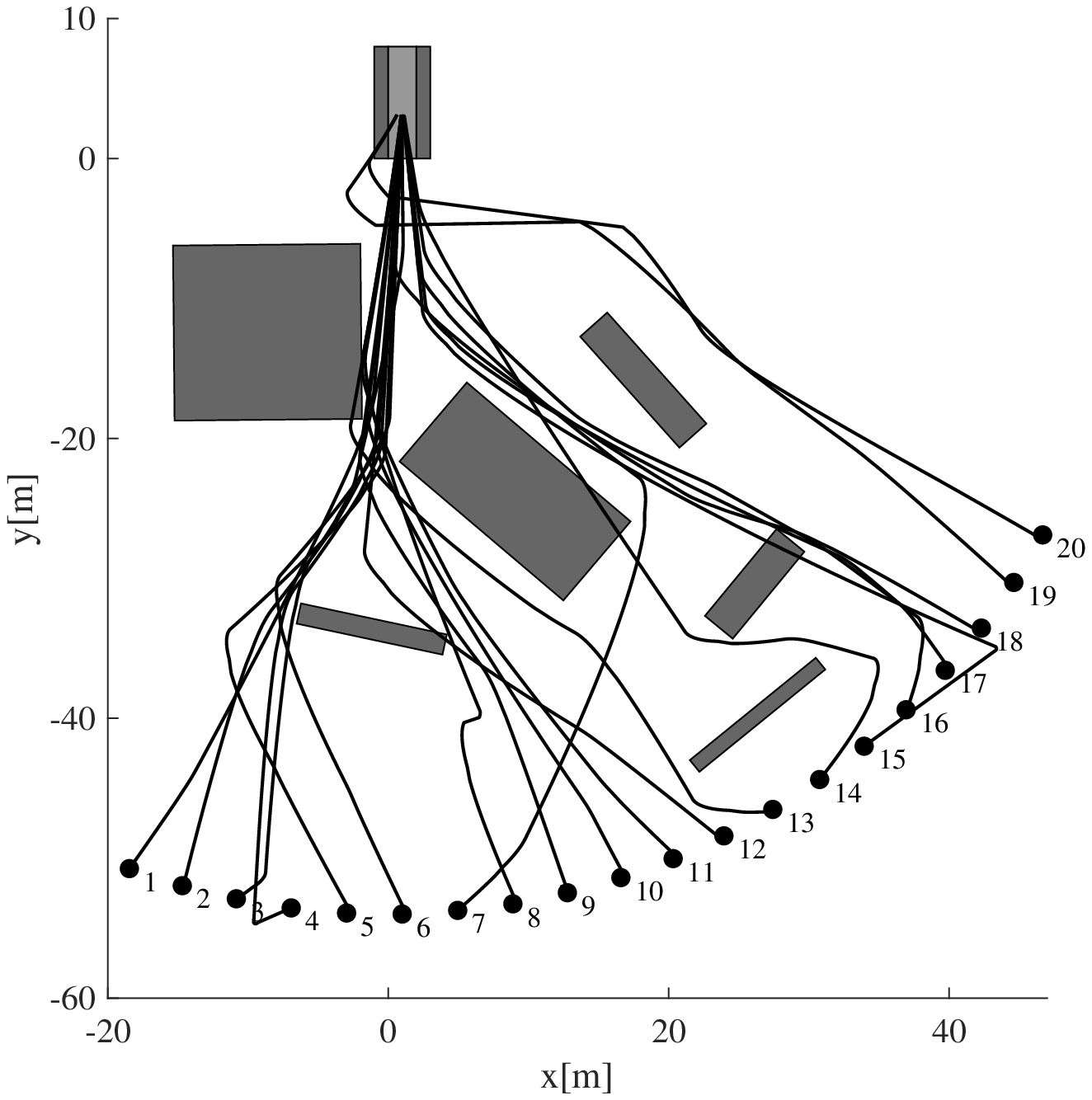}
\label{fig:rrt_example_traj}}
\hfil
\subfigure[SGP example trajectories.]{\includegraphics[height=5cm]{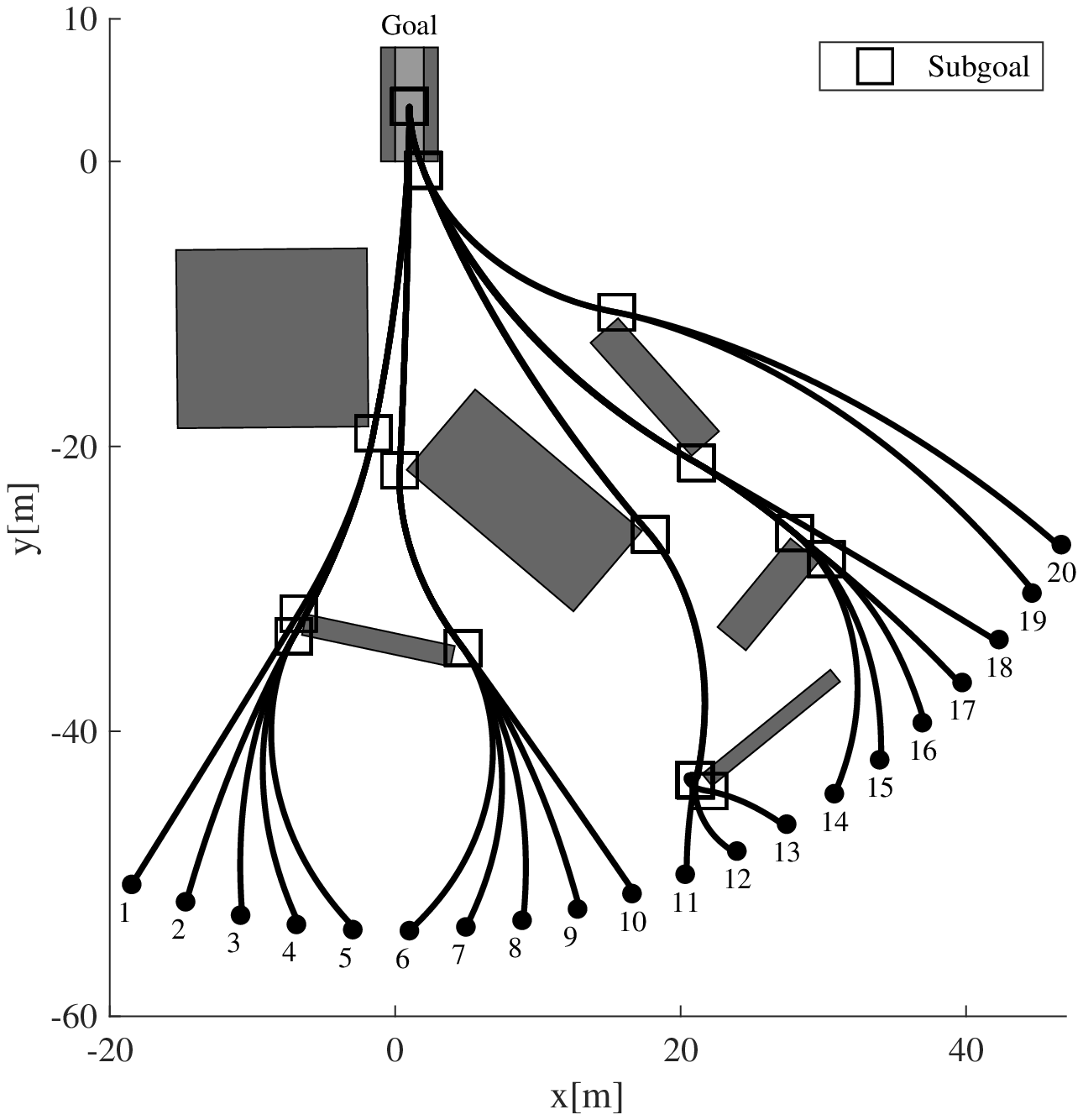}
\label{fig:sgp_example_traj}}
\caption{RRT* and SGP example trajectories.}
\label{fig:rrt_sgp_trajectories_multiple}
\end{figure}

Next, SGP and RRT* solutions are compared over the entire task domain, with paths generated from each starting location. 
Fig. \ref{fig:sgp_example_traj} shows the resulting SGP solutions. 
For RRT*, ten paths are generated at each start location, each using $k=170$ samples, and Fig. \ref{fig:rrt_example_traj} shows the lowest-cost RRT* trajectory for each. 
Fig. \ref{fig:path_time_comparison} compares the resulting costs for SGP and RRT* paths vs. start location, with RRT* exceeding SGP for all start locations but one. 
Fig. \ref{fig:run_time_comparison} plots the computation time for SGP and RRT* methods vs. start position.

\begin{figure}[tbp]
\centering
\subfigure[Actual path time: RRT* vs. SGP.]{\includegraphics[height=2.75cm]{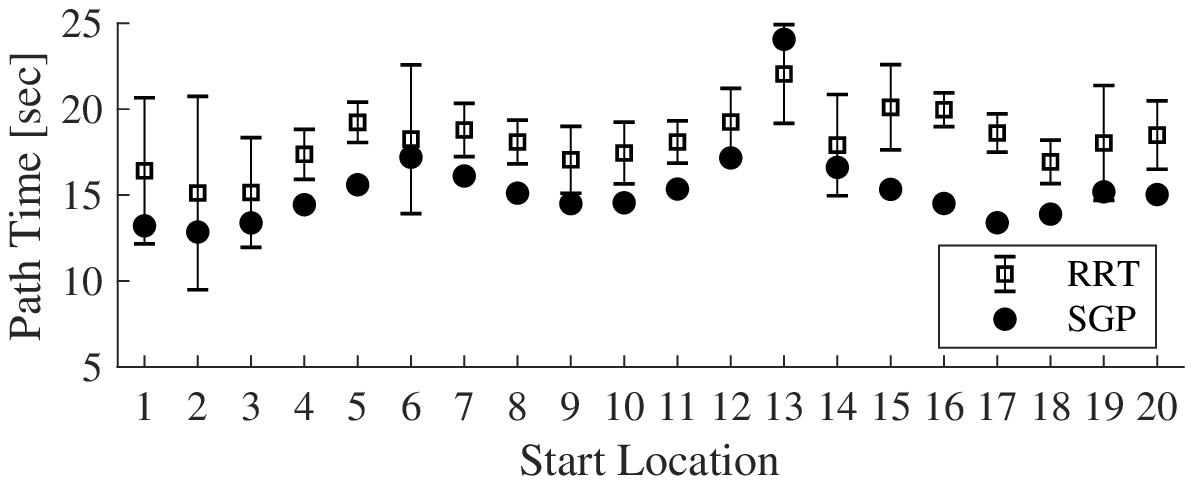}
\label{fig:path_time_comparison}}
\hfil
\subfigure[CPU time: RRT* vs. SGP.]{\includegraphics[height=2.75cm]{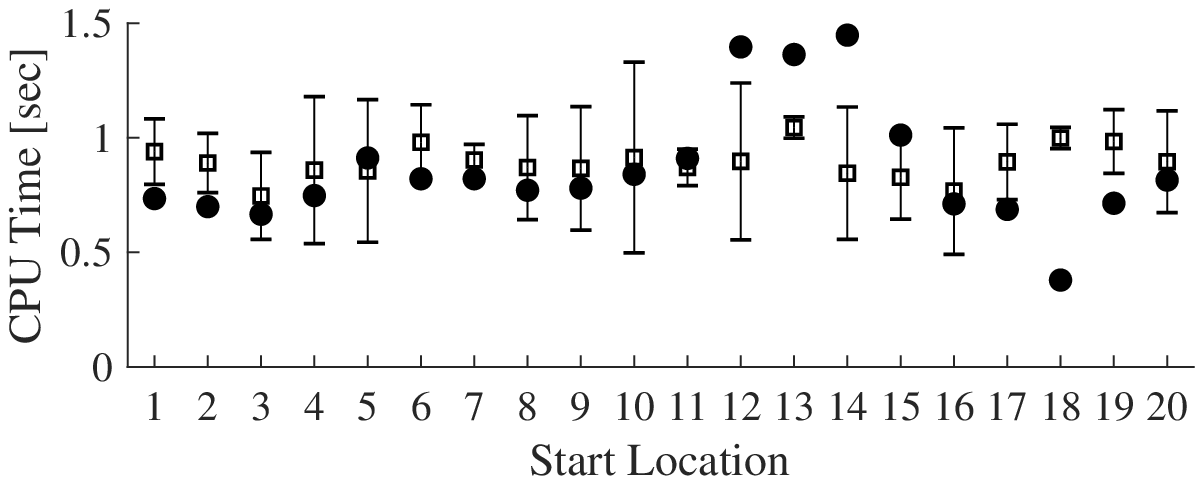}
\label{fig:run_time_comparison}}
\caption{RRT* vs. SGP performance and planning time comparison across multiple paths.}
\label{fig:rrt_sgp_comparison_multiple}
\end{figure}

\subsection{Additional Planning Examples}

\begin{figure}[tbph]
\centering
\subfigure[SGP solution trajectory and speed profile. Mean speed is 2.97 m/s, path distance is 64.0 m, and path time is 21.42 seconds. CPU time is 0.551 seconds.]{
\begin{tabular}[b]{c}
\includegraphics[height=4.5cm]{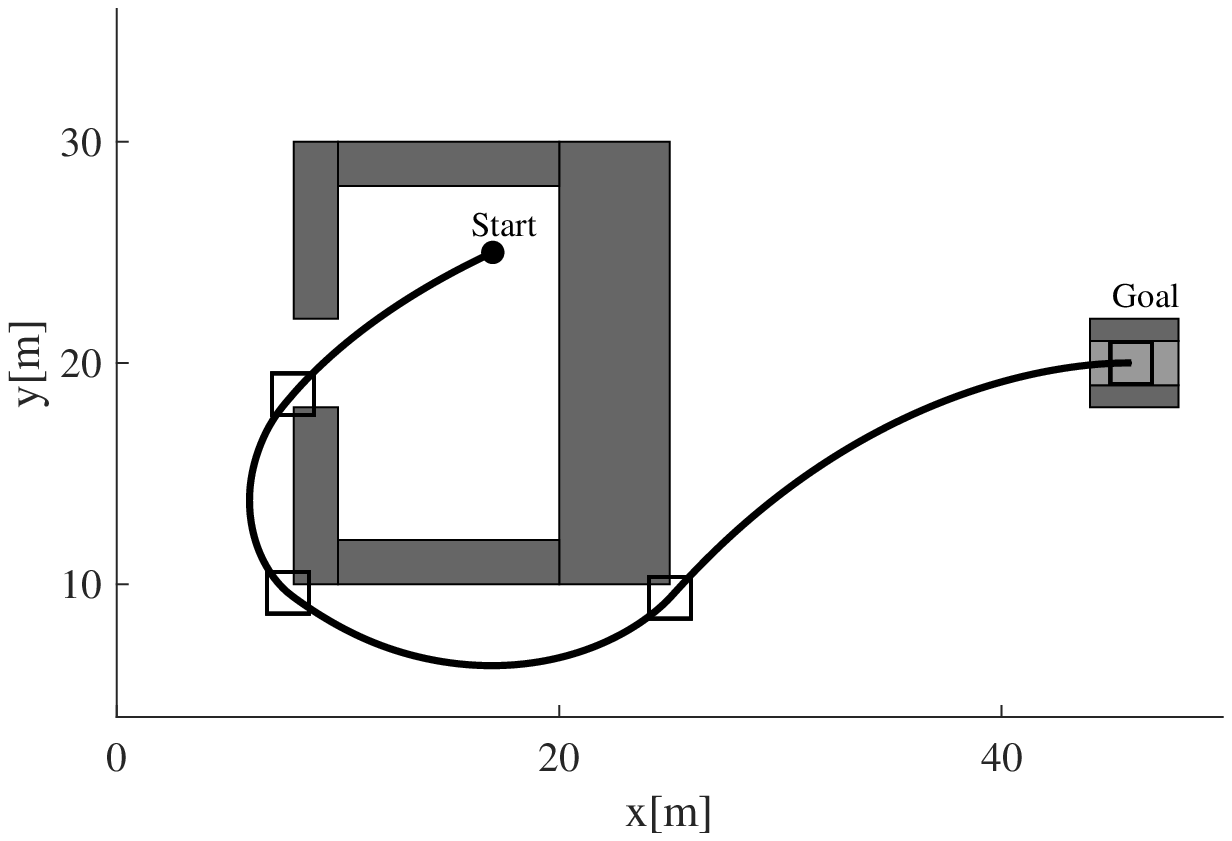} \\
\includegraphics[height=1.6cm]{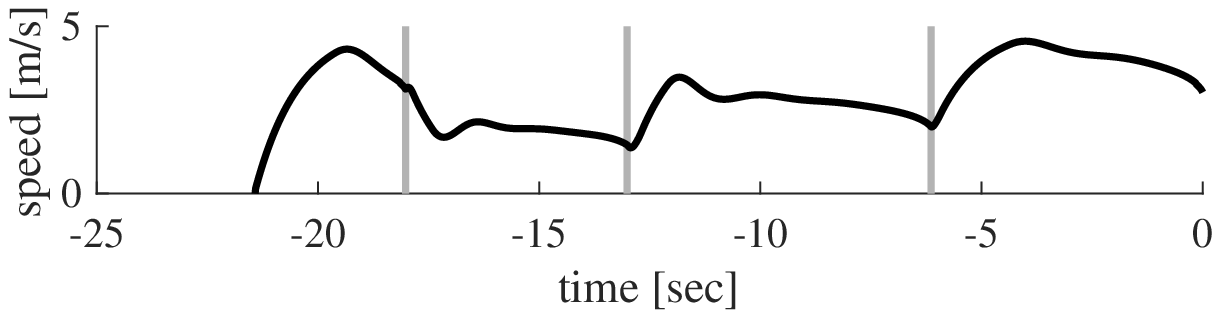}
\label{fig:sgp_u_course}
\end{tabular}
}
\hfil
\subfigure[RRT* was run three times to show a range of results. Mean speed across runs is 2.52 m/s, mean path distance is 53.3 m, and mean path time is 23.53 seconds. Mean cpu time to find a path is 4.56 seconds, using an average 375 samples.]{
\begin{tabular}[b]{c}
\includegraphics[height=4.5cm]{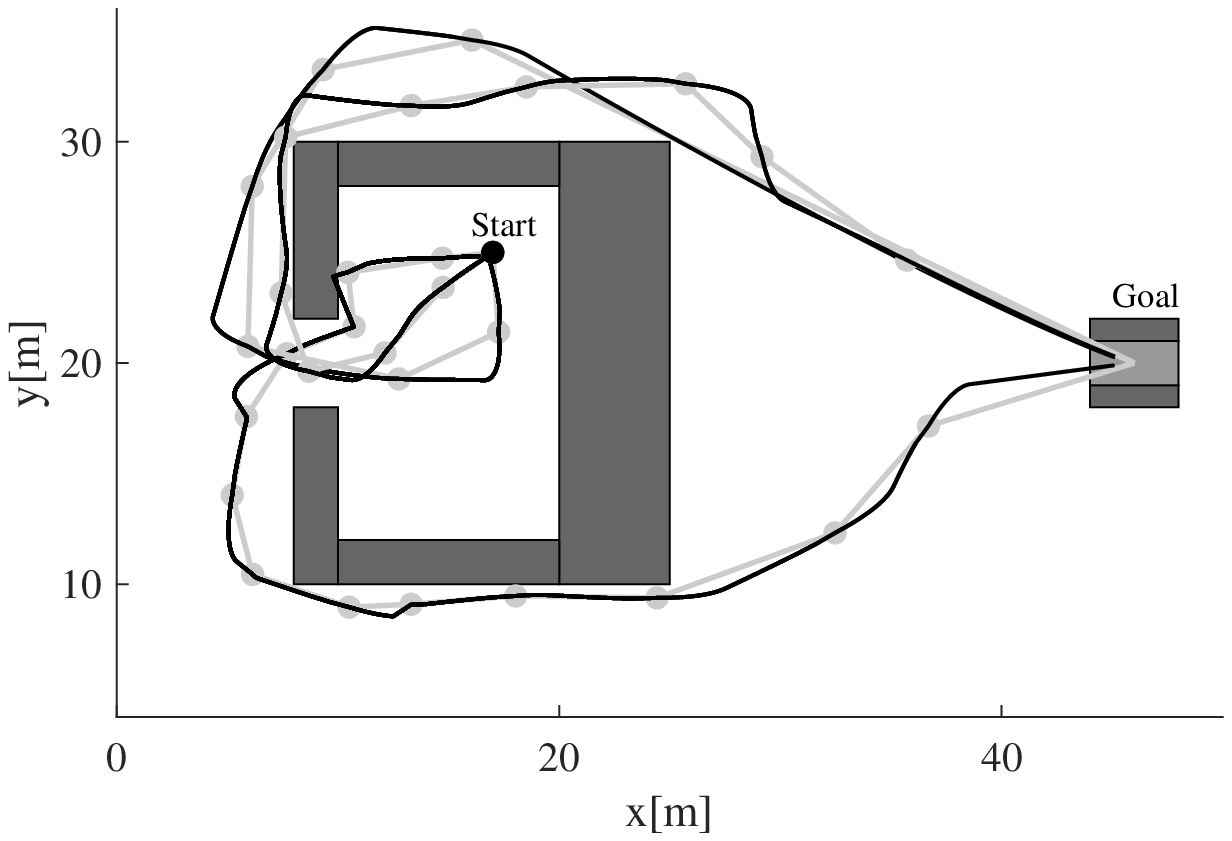} \\
\includegraphics[height=1.6cm]{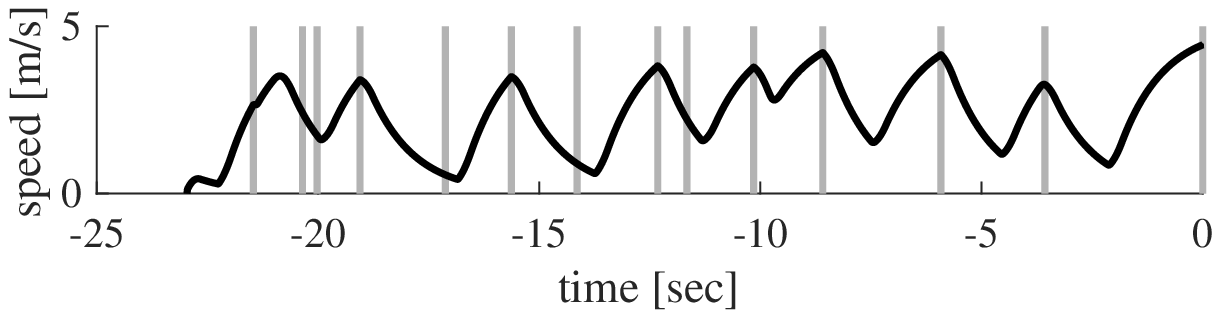}
\end{tabular}
\label{fig:rrt_u_course}}
\caption{Trap-course solutions.}
\label{fig:rrt_sgp_u_comparison}
\end{figure}

A U-shaped course is used to evaluate robust planning performance (Fig. \ref{fig:rrt_sgp_u_comparison}). RRT planners typically have difficulty planning in this type of environment because they rely on sampling points within the narrow passage. 
SGP in contrast immediately identifies subgoal locations using the necessary conditions, and provides a solution with minimal changes in vehicle speed.
The resulting RRT* path is shorter (53.3 m vs. 64.0 m), but has a lower average speed (2.52 m/s 2.97 m/s) and higher path cost (23.53 sec vs. 21.42 sec) due to the many abrupt speed changes. RRT* however used on average 375 samples in this course, requiring an average of 4.56 seconds of cpu time for planning, compared to 0.551 seconds for SGP. 




SGP solutions are evaluated on two additional courses: the two-block world and hallway world. The two-block world contains two skew-angled rectangular blocks placed to the left of the goal to show obstacle avoidance in an exterior, unenclosed environment. Velocity vector field and cost-to-go plots are shown in Fig. \ref{fig:solution_trajectories_1}, summarizing solutions over the entire domain. The hallway-world demonstrates path planning in an interior environment. The main part of the course is completely enclosed by walls formed from adjacent rectangular obstructions. Note that obstructions in this course contain concave corners. These vertices are immediately excluded as subgoal candidates since there is no valid trajectory passing through them that satisfies constraints. In the hallway world, Fig. \ref{fig:hallway_ctg} depicts subgoals placed along continuous (linear) constraint boundaries at locations tangent to the guidance policy by \textbf{getEdgeSubgoals()}.

\subsection{Effective Branching Factor}

Effective branching factor is computed for each test case from the total number of nodes expanded, $N$, and the solution path depth, $d$, as in Eqn. \ref{eqn:search_complexity}. A low branching factor indicates a good fit between planner heuristic and actual path costs given the obstacle configuration of the task. For RRT*, the number of nodes expanded is equal to the number of random samples, $k$. In the uniform course, the RRT* solution has an average effective branching factor of 3.02 vs. SGP with 3.48. In the U-shaped course, the RRT* solution average effetive branching factor is 1.63 vs. SGP with 3.16. RRT* solutions however are deeper than SGP: 5.2 vs. 3.25 in the uniform course, and 10.0 vs. 4.0 in the U-shaped course. While RRT* samples more nodes than SGP uses, RRT* maintains a low branching factor because RRT* solution paths involve more nodes than SGP solutions for the same task. This result suggests that SGP could be further improved by fine-tuning the heuristic cost function.


\begin{center}
\begin{tabular}{p{3cm}|p{2cm}|p{2cm}|p{2cm}|p{2cm}}
    \textbf{Course} & \textbf{RRT*-200 $b^*$} & \textbf{RRT*-200 $d$} & \textbf{SGP $b^*$} & \textbf{SGP $d$} \\
    \hline \hline
    Uniform (20 cases) & $3.02 \pm 1.27$ & 5.20 & $3.48 \pm 0.384$ & 3.25 \\
    U-shaped & $1.63 \pm 0.13 $ & 10.33 & 3.16 & 4.0
\end{tabular}
\end{center}

\vspace{15mm}
\begin{figure}
  \centering
  \subfigure[Two-block world velocity vector field.]{\includegraphics[height=5.5cm]{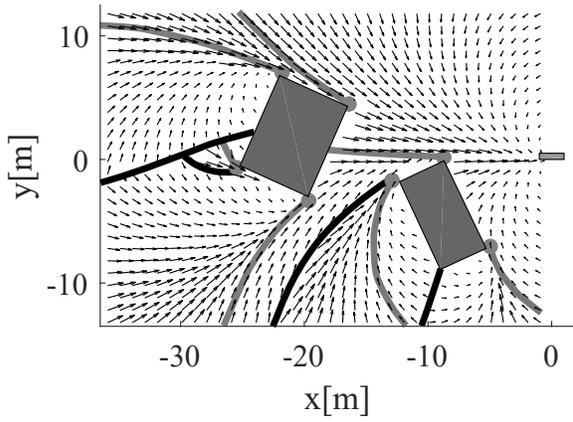}}
  \label{fig:two_block_vvf}
  \subfigure[Two-block world time-to-go map and subgoals.]{\includegraphics[height=5.5cm]{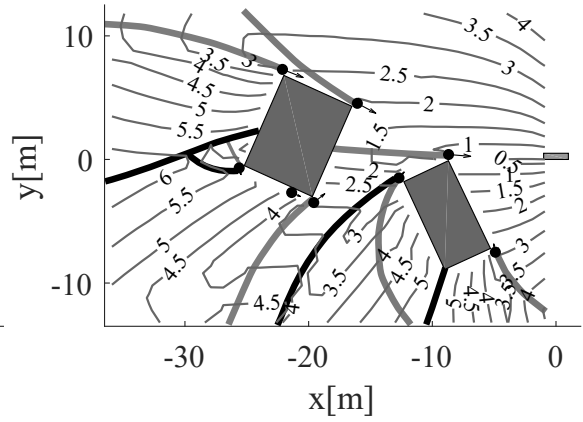}} \\
  \vspace{10mm}
  \subfigure[Hallway world velocity vector field.]{\includegraphics[height=7cm]{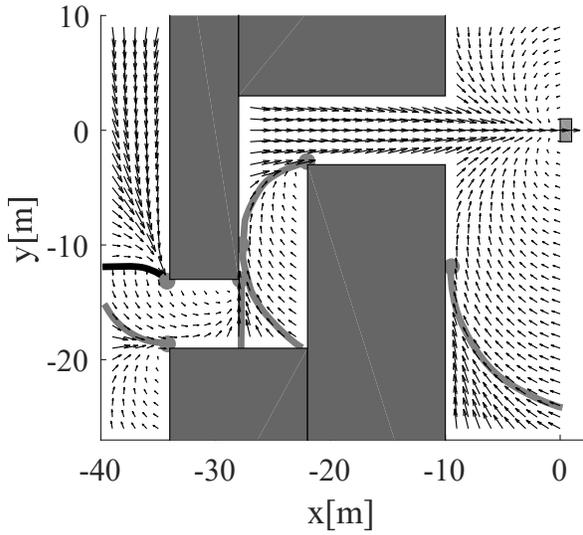}}
  \subfigure[Hallway world time-to-go map and subgoals.]{\includegraphics[height=7cm]{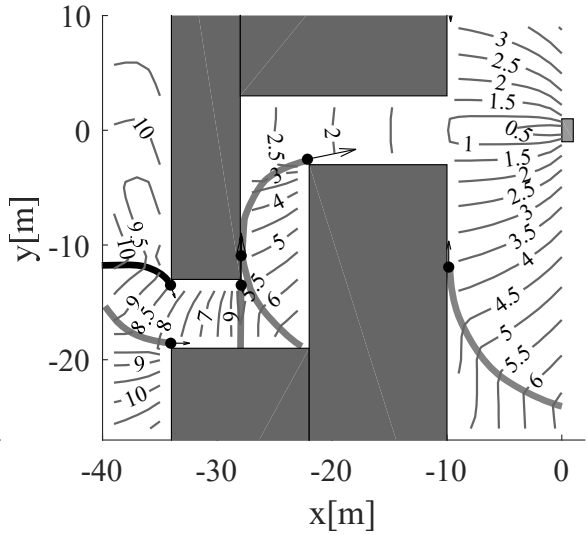}
  \label{fig:hallway_ctg}
  }
 \caption{Subgoal planning test solutions. Wide gray lines indicate bounding trajectories. Black lines indicate partition separatrices.}
 \label{fig:solution_trajectories_1}
\end{figure}

\section{Discussion}
\label{sec:discussion}
This section discusses subgoal planning implementation issues involved with dynamic tasks, higher-dimension configuration spaces, and sensory process integration.

\subsection{Dynamic Planning Tasks}
The subgoal planning approach is most applicable to tasks involving a high level of dynamic maneuvering. Verma and Mettler \cite{verma2016scaling} quantify this relationship between system maneuver capability and environment geometry by introducing the manuever-length scale ratio (MLSR). Vehicles with a high maneuverability relative to distances between obstacles (e.g. passenger aircraft cross-country routing) can adequately plan routes in the spatial domain using visibility graph or way-point planning. Agile vehicles operating at high speed and close to obstructions, such as UAVs flying in an urban environment, must account for vehicle dynamic constraints at the trajectory planning level. The subgoal guidance algorithm provides a method, using IPs as the units of behavior, to identify dynamically optimized trajectories more efficiently than by using full-state, sampling-based planning approaches. The subgoal planning approach can be extended to any robotic, dynamic planning task, providing a more efficient link between discrete task and continuous motion planning.

\subsection{Three-Dimensional Spatial Domain}
The present SGP motion planning implementation has been formulated for a 2D spatial domain ($\mathcal{W} \in \mathbb{R}^2 $). In $\mathbb{R}^2$, necessary conditions specify a discrete set of subgoal candidate points, i.e obstacle vertices. A task in a higher-dimension configuration space can result in a subgoal candidate set containing continuous subsets. For example, in a navigation problem in $\mathcal{W} \in \mathbb{R}^3 $ with a single spherical obstacle, the set of subgoal candidates consists of a continuous circular manifold of points where the optimal velocity vector is tangent to the sphere surface. In this case, the planner must choose one or more discrete subgoal candidates from each continuous subset while exploring neighbor nodes. Determining optimal subgoal candidates in this case involves higher computational cost. Nevertheless, a satisficing approach may be applied that picks a sub-optimal subgoal state. Future work is needed to investigate subgoal planning for higher-dimension spaces in more detail.

\subsection{Uncertain Environments}
A fundamental issue in uncertain or unknown environments, and the motivation for dynamic programs in general, is that planning information flows from the goal state towards the start; each subgoal state depends on the next subgoal, but is the reverse direction in which the agent experiences the environment. 
This disparity causes two primary issues, first, the agent must make planning decisions with limited subgoal information. 
Information both perceived during prior runs and extrapolated from the environment provide clues to determine the best actions. 
Verma et al \cite{verma2016computational} investigate environment perception, representation and learning in human motion guidance experiments. 
Results show that subjects improve planning over repeated trials by using knowledge about future parts of the task from prior trials. 
In addition, subjects make decisions using meta-information about how they expect a space is connected, for example, the subjects didn't explore or focus their attention towards shortcuts in cases where they assumed the path to be blocked based on their prior expectation of the environment layout. 
Another issue is that when information is limited, incorrect decisions are likely. To prevent catastrophic outcomes, an agent must take into account information constraints in the planning process. Accounting for these constraints may include choosing safe, known routes over potentially faster routes that are less certain. Subgoals, and the associated partitioning of the problem space, provide a method for propagating these uncertainties through the task environment.

In uncertain environments, subgoal costs are random variables, characterized by a distribution $p(J(g_k))$. A utility function, $U : J(g_k)) \rightarrow \mathbb{R}^+$, defines a positive value to the agent of a task state based on the cost-to-go. The resulting decision policy is expressed as the maximization of expected utility of:
\begin{eqnarray}
g_{k+1} = \gamma(g_k, e_k) = \argmax_{g_{k+1} \in G} E\left[ U(p(J(g_{k+1}))) \right]
\label{eqn:planning_unc_policy}
\end{eqnarray}
Stability, robustness and performance are characterized by the utility distribution of each subgoal. The utility function may be designed to include nonlinearities that emphasize relevant decision making characteristics such as diminishing return of low cost paths, or avoidance of extremely high path costs \cite{russell2003artificial, bernoulli1954exposition}. If the chosen utility function $U(\cdot)$ is strictly monotonically decreasing with cost-to-go, then the resulting finite-time stability condition is that the utility gain from each subgoal must be bounded from below by a constant $\epsilon > 0$:
\begin{equation}
E\left[ U(J(g_k) - J(g_{k-1})) \right] \ge \epsilon
\label{eqn:finite_stability_unc}
\end{equation}
Eqn. \ref{eqn:finite_stability_unc} defines an information constraint on a subgoal utility distribution that ensures stability. If no known subgoal candidates satisfy this constraint, the agent may generate new subgoal candidates that guarantee a minimum utility. 
For example, a subgoal may be added that provides an option of stopping or moving into a loitering pattern before a possible obstacle collision occurs. 
This safety guarantee would be similar to the approach used by Schouwenaars et al. \cite{schouwenaars2004receding} with a mixed-integer path planning optimization.

If a task will be repeated multiple times, the agent must balance the maximization of performance on the current trial based on available information (exploitation), with gaining task information (exploration) to improve future performance. One way to quantify this tradeoff is using decision entropy, or informational regret \cite{tishby2011information}, which can be computed across the likelihoods that each subgoal is optimal, $H_{dec} = \sum_{g_i \in G} p_{min}(g_i|\mathbf{x}) \log p_{min}(g_i|\mathbf{}x|)$, where $p_{min}(g_i|\mathbf{x})$ is the probability that choosing subgoal $g_i$ minimizes path cost, conditioned on the current agent state $\mathbf{x}$. Decision information can also be quantified by empowerment \cite{klyubin2005empowerment}, which is the maximum mutual information between actions and perception for a specific action policy. Exploration can be modeled as maximizing future empowerment; an agent should explore by choosing routes that will provide the maximum decrease in decision entropy for future runs.

\subsection{Sensory Process Integration}
During navigation tasks in the real world, agents combine environment information perceived during task execution with prior knowledge. The agent must decide on a subgoal sequence based on estimates of subgoal locations and perception of constraint boundaries. To alleviate the computational complexity of continual environment perception, receding-horizon approaches \cite{mettler2008receding} combine near and long-term planning. In addition, receding horizon planning provides a model for exploratory vs. exploitative behavior when a task is repeated for multiple trials \cite{verma2016computational}. 
In previous work, the environment was modeled by a cell-grid defining obstacle occupancy and cost over the task domain. As the agent moves through and perceives the environment, occupancy and cost cells are updated, providing information to improve future planning trials. This approach however requires a large amount of memory to keep a high-resolution map of the environment.

Subgoal guidance offers an alternative efficient hierarchical representation of the global cost-to-go information, in the form of subgoal nodes connected by guidance primitive elements. 
For humans, perception of the environment focuses on identifying subgoal candidates (i.e. obstacle corners), and verifying that projected guidance trajectories satisfy environment constraints. 
The information from this perceptual process is used to update the agent's knowledge of subgoal locations, cost distributions, and feasible connections \cite{verma2015investigation}.  
Because of this focus on only feasible routes, a sparse subgoal representation is more efficient across different environment scales than a grid representation.
Subgoals occur at the resolution needed to define the dynamic and perceptual interactions between the agent and environment, and therefore contain the minimum required information needed to perform the task.
Future work will test the use of subgoals as an environment representation for planning and learning implementations in large, uncertain task domains.

\section{Conclusions}
\label{sec:conclusion}

This paper presents an optimal control formulation of subgoal guidance strategies inspired by human guidance behavior. Despite polynomial time complexity, The subgoal planner (SGP) generates lower-cost solutions more quickly than a reference sampling-based planner in the example task domains presented above. SGP is able to reduce complexity by choosing subgoal locations based on environment constraints to avoid oversampling. In addition, unlike previous roadmap methods, SGP places subgoals based on the relationship between constraints and vehicle dynamics so as to generate dynamically optimal solutions. Stability conditions for deterministic planning are presented, and extended to general conditions for robustness and performance for a stochastic planning process.

The subgoal planning algorithm achieves its performance by exploiting structural properties of the spatial navigation and guidance task, consisting of equivalence classes, to efficiently generate optimal control solutions. The resulting algorithm generates solutions that mimic human behavior, and achieve improved performance and robustness over existing approaches. Subgoal planning provides a general method of discretizing continuous environments based on system dynamics and environment constraints.  SGP identifies discrete minimal-cost subgoal candidates in a continuous task domain using conditions based on constraint topology. The A* graph-search algorithm then computes an optimal subgoal sequence among subgoal candidates.

The subgoal necessary conditions presented here, along with the concepts of partitions and guidance elements, are examples of principles used to exploit task structure \cite{braun2010structure} 
Taken together, these primitive elements form a language of spatial behavior that can be applied to human behavior analysis and autonomous guidance algorithms.

Future work is needed investigate the extension of this approach to systems with higher-dimensional configuration spaces, uncertain environments, and to the sensory processes involved in real-time motion planning tasks. Finally, future work will investigate the application of subgoal planning concepts to provide efficient autonomous guidance solutions and intuitive human-machine interactive systems.

\section*{Acknowledgments}
This work is financially supported by the U.S. Office of Naval Research (2013-16, \#11361538) and the National Science Foundation (CAREER 2013-18 CMMI-1254906).

\bibliographystyle{new-aiaa}
\bibliography{human_guidance_references}

\begin{thebibliography}{63}
\newcommand{\enquote}[1]{``#1''}
\providecommand{\natexlab}[1]{#1}
\providecommand{\url}[1]{\texttt{#1}}
\providecommand{\urlprefix}{URL }
\expandafter\ifx\csname urlstyle\endcsname\relax
  \providecommand{\doi}[1]{doi:\discretionary{}{}{}#1}\else
  \providecommand{\doi}{doi:\discretionary{}{}{}\begingroup
  \urlstyle{rm}\Url}\fi

\bibitem[{Mettler(2011)}]{mettler2011structure}
Mettler, B., \enquote{Structure and Organizational Principles of Agile
  Behavior: Challenges and Opportunities in Cognitive Engineering,} \emph{Cog.
  Crit}, Vol.~3, 2011, pp. 1--21.

\bibitem[{Baddeley(1992)}]{baddeley1992working}
Baddeley, A., \enquote{Working Memory,} \emph{Science}, Vol. 255, No. 5044,
  1992, p. 556.
\newblock \doi{10.1016/j.cub.2009.12.014}.

\bibitem[{Borah et~al.(1988)Borah, Young, and Curry}]{borah1988optimal}
Borah, J., Young, L.~R., and Curry, R.~E., \enquote{Optimal Estimator Model for
  Human Spatial Orientationa,} \emph{Annals of the New York Academy of
  Sciences}, Vol. 545, No.~1, 1988, pp. 51--73.
\newblock \doi{10.1111/j.1749-6632.1988.tb19555.x}.

\bibitem[{Mettler et~al.(2015)Mettler, Kong, Li, and
  Andersh}]{mettler2015systems}
Mettler, B., Kong, Z., Li, B., and Andersh, J., \enquote{Systems View on
  Spatial Planning and Perception Based on Invariants in Agent-environment
  Dynamics,} \emph{Frontiers in Neuroscience}, Vol.~8, 2015, p. 439.
\newblock \doi{10.3389/fnins.2014.00439}.

\bibitem[{Simon(1972)}]{simon1972theories}
Simon, H.~A., \enquote{Theories of Bounded Rationality,} \emph{Decision and
  organization}, Vol.~1, 1972, pp. 161--176.

\bibitem[{Simon(1990)}]{simon1990invariants}
Simon, H.~A., \enquote{Invariants of Human Behavior,} \emph{Annual review of
  psychology}, Vol.~41, No.~1, 1990, pp. 1--20.
\newblock \doi{10.1146/annurev.ps.41.020190.000245}.

\bibitem[{Kong and Mettler(2013)}]{kong2013modeling}
Kong, Z., and Mettler, B., \enquote{Modeling Human Guidance Behavior Based on
  Patterns in Agent--environment Interactions,} \emph{Human-Machine Systems,
  IEEE Transactions on}, Vol.~43, No.~4, 2013, pp. 371--384.
\newblock \doi{10.1109/tsmc.2013.2262043}.

\bibitem[{Braun et~al.(2010)Braun, Mehring, and Wolpert}]{braun2010structure}
Braun, D.~A., Mehring, C., and Wolpert, D.~M., \enquote{Structure Learning in
  Action,} \emph{Behavioural brain research}, Vol. 206, No.~2, 2010, pp.
  157--165.
\newblock \doi{10.1016/j.bbr.2009.08.031}.

\bibitem[{Kong and Mettler(2009)}]{kong2009general}
Kong, Z., and Mettler, B., \enquote{On the General Characteristics of 2d
  Optimal Obstacle-field Guidance Solution,} \emph{Decision and Control, 2009},
  IEEE, 2009, pp. 3448--3453.
\newblock \doi{10.1109/cdc.2009.5400828}.

\bibitem[{Mettler et~al.(2017)Mettler, Verma, and Feit}]{mettler2017emergent}
Mettler, B., Verma, A., and Feit, A., \enquote{Emergent patterns in
  agent-environment interactions and their roles in supporting agile spatial
  skills,} \emph{Annual Reviews in Control}, Vol.~44, 2017, pp. 252--273.
\newblock \doi{10.1016/j.arcontrol.2017.09.001}.

\bibitem[{Mettler and Kong(2013{\natexlab{a}})}]{mettler2013hierarchical}
Mettler, B., and Kong, Z., \enquote{Hierarchical Model of Human Guidance
  Performance Based on Interaction Patterns in Behavior,} \emph{arXiv preprint
  arXiv:1311.3672}, 2013{\natexlab{a}}.

\bibitem[{Feit and Mettler(2015)}]{feit2015experimental}
Feit, A., and Mettler, B., \enquote{Experimental Framework for Investigating
  First-person Guidance and Perception,} \emph{Systmes, Man, and Cybernetics},
  IEEE, 2015.
\newblock \doi{10.1109/smc.2015.177}.

\bibitem[{Feit and Mettler(2016)}]{feit2016extraction}
Feit, A., and Mettler, B., \enquote{Extraction and deployment of human guidance
  policies,} \emph{IFAC-PapersOnLine}, Vol.~49, No.~32, 2016, pp. 95--100.
\newblock \doi{10.1016/j.ifacol.2016.12.196}.

\bibitem[{Feit et~al.(2015)Feit, Verma, and Mettler}]{feit2015human}
Feit, A., Verma, A., and Mettler, B., \enquote{A Human-inspired Subgoal-based
  Approach to Constrained Optimal Control,} \emph{Systems, Man and Cybernetics,
  2015}, IEEE, 2015.
\newblock \doi{10.1109/cdc.2015.7402389}.

\bibitem[{Athans and Falb(2006)}]{athans2006optimal}
Athans, M., and Falb, P.~L., \emph{Optimal Control: An Introduction to the
  Theory and Its Applications}, Courier Dover Publications, 2006.

\bibitem[{Hespanha(2009)}]{hespanha2009linear}
Hespanha, J.~P., \emph{Linear Systems Theory}, Princeton university press,
  2009.
\newblock \doi{10.23943/9781400890088}.

\bibitem[{Mettler and Kong(2013{\natexlab{b}})}]{mettler2013mapping}
Mettler, B., and Kong, Z., \enquote{Mapping and Analysis of Human Guidance
  Performance from Trajectory Ensembles,} \emph{Human-Machine Systems, IEEE
  Transactions on}, Vol.~43, No.~1, 2013{\natexlab{b}}, pp. 32--45.
\newblock \doi{10.1109/tsmca.2012.2207110}.

\bibitem[{Tishby and Polani(2011)}]{tishby2011information}
Tishby, N., and Polani, D., \enquote{Information Theory of Decisions and
  Actions,} \emph{Perception-action cycle}, Springer, 2011, pp. 601--636.
\newblock \doi{10.1007/978-1-4419-1452-1_19}.

\bibitem[{Marr and Poggio(1976)}]{marr1976understanding}
Marr, D., and Poggio, T., \enquote{From Understanding Computation to
  Understanding Neural Circuitry,} 1976.

\bibitem[{Borrelli et~al.(2009)Borrelli, Bemporad, and
  Morari}]{borrelli2009constrained}
Borrelli, F., Bemporad, A., and Morari, M., \emph{Constrained Optimal Control
  and Predictive Control}, 2009.

\bibitem[{Borrelli et~al.(2005)Borrelli, Baoti{\'c}, Bemporad, and
  Morari}]{borrelli2005dynamic}
Borrelli, F., Baoti{\'c}, M., Bemporad, A., and Morari, M., \enquote{Dynamic
  Programming for Constrained Optimal Control of Discrete-time Linear Hybrid
  Systems,} \emph{Automatica}, Vol.~41, No.~10, 2005, pp. 1709--1721.
\newblock \doi{10.1016/j.automatica.2005.04.017}.

\bibitem[{Arkin(1998)}]{arkin1998behavior}
Arkin, R.~C., \enquote{Behavior-based Robotics (intelligent Robotics and
  Autonomous Agents),} 1998.

\bibitem[{Warren(1989)}]{warren1989global}
Warren, C.~W., \enquote{Global Path Planning Using Artificial Potential
  Fields,} \emph{Robotics and Automation, 1989. Proceedings., 1989 IEEE
  International Conference on}, IEEE, 1989, pp. 316--321.

\bibitem[{Koren and Borenstein(1991)}]{koren1991potential}
Koren, Y., and Borenstein, J., \enquote{Potential Field Methods and Their
  Inherent Limitations for Mobile Robot Navigation,} \emph{Robotics and
  Automation, 1991. Proceedings., 1991 IEEE International Conference on}, IEEE,
  1991, pp. 1398--1404.

\bibitem[{Bell and Weir(2004)}]{bell2004forward}
Bell, G., and Weir, M., \enquote{Forward Chaining for Robot and Agent
  Navigation Using Potential Fields,} \emph{Proceedings of the 27th
  Australasian conference on Computer science-Volume 26}, Australian Computer
  Society, Inc., 2004, pp. 265--274.

\bibitem[{Mettler et~al.(2010)Mettler, Dadkhah, and Kong}]{mettler2010agile}
Mettler, B., Dadkhah, N., and Kong, Z., \enquote{Agile Autonomous Guidance
  Using Spatial Value Functions,} \emph{Control Engineering Practice}, Vol.~18,
  No.~7, 2010, pp. 773--788.
\newblock \doi{10.1016/j.conengprac.2010.02.013}.

\bibitem[{Primbs et~al.(2000)Primbs, Nevisti{\'c}, and
  Doyle}]{primbs2000receding}
Primbs, J.~A., Nevisti{\'c}, V., and Doyle, J.~C., \enquote{A Receding Horizon
  Generalization of Pointwise Min-norm Controllers,} \emph{Automatic Control,
  IEEE Transactions on}, Vol.~45, No.~5, 2000, pp. 898--909.
\newblock \doi{10.1109/9.855550}.

\bibitem[{Verma and Mettler(2016{\natexlab{a}})}]{verma2016computational}
Verma, A., and Mettler, B., \enquote{Computational Investigation of Environment
  Learning in Guidance and Navigation,} \emph{Journal of Guidance, Control, and
  Dynamics}, 2016{\natexlab{a}}, pp. 1--19.
\newblock \doi{10.2514/1.g001889}.

\bibitem[{Cowan et~al.(2005)Cowan, Elliott, Saults, Morey, Mattox,
  Hismjatullina, and Conway}]{cowan2005capacity}
Cowan, N., Elliott, E.~M., Saults, J.~S., Morey, C.~C., Mattox, S.,
  Hismjatullina, A., and Conway, A. R.~A., \enquote{On the Capacity of
  Attention: Its Estimation and Its Role in Working Memory and Cognitive
  Aptitudes,} \emph{Cognitive psychology}, Vol.~51, No.~1, 2005, pp. 42--100.
\newblock \doi{10.1016/j.cogpsych.2004.12.001}.

\bibitem[{Tedrake et~al.(2010)Tedrake, Manchester, Tobenkin, and
  Roberts}]{tedrake2010lqr}
Tedrake, R., Manchester, I.~R., Tobenkin, M., and Roberts, J.~W.,
  \enquote{Lqr-trees: Feedback Motion Planning Via Sums-of-squares
  Verification,} \emph{The International Journal of Robotics Research},
  Vol.~29, No.~8, 2010, pp. 1038--1052.
\newblock \doi{10.1177/0278364910369189}.

\bibitem[{van Dijk and Polani(2011)}]{dijk2011grounding}
van Dijk, S.~G., and Polani, D., \enquote{Grounding Subgoals in Information
  Transitions,} \emph{2011 IEEE Symposium on Adaptive Dynamic Programming and
  Reinforcement Learning (ADPRL)}, IEEE, 2011, pp. 105--111.
\newblock \doi{10.1109/adprl.2011.5967384}.

\bibitem[{Tishby et~al.(2000)Tishby, Pereira, and
  Bialek}]{tishby2000information}
Tishby, N., Pereira, F.~C., and Bialek, W., \enquote{The Information Bottleneck
  Method,} \emph{arXiv preprint physics/0004057}, 2000.

\bibitem[{Sutton et~al.(1999)Sutton, Precup, and Singh}]{sutton1999mdps}
Sutton, R.~S., Precup, D., and Singh, S., \enquote{Between Mdps and Semi-mdps:
  A Framework for Temporal Abstraction in Reinforcement Learning,}
  \emph{Artificial intelligence}, Vol. 112, No.~1, 1999, pp. 181--211.
\newblock \doi{10.1016/s0004-3702(99)00052-1}.

\bibitem[{Frazzoli et~al.(1999)Frazzoli, Dahleh, and
  Feron}]{frazzoli1999hybrid}
Frazzoli, E., Dahleh, M.~A., and Feron, E., \enquote{A Hybrid Control
  Architecture for Aggressive Maneuvering of Autonomous Helicopters,}
  \emph{Decision and Control, 1999. Proceedings of the 38th IEEE Conference
  on}, Vol.~3, IEEE, 1999, pp. 2471--2476.

\bibitem[{Frazzoli et~al.(2002)Frazzoli, Dahleh, and Feron}]{frazzoli2002real}
Frazzoli, E., Dahleh, M.~A., and Feron, E., \enquote{Real-time Motion Planning
  for Agile Autonomous Vehicles,} \emph{Journal of Guidance, Control, and
  Dynamics}, Vol.~25, No.~1, 2002, pp. 116--129.
\newblock \doi{10.2514/2.4856}.

\bibitem[{Gavrilets et~al.(2004)Gavrilets, Mettler, and
  Feron}]{gavrilets2004human}
Gavrilets, V., Mettler, B., and Feron, E., \enquote{Human-inspired Control
  Logic for Automated Maneuvering of Miniature Helicopter,} \emph{Journal of
  Guidance, Control, and Dynamics}, Vol.~27, No.~5, 2004, pp. 752--759.
\newblock \doi{10.2514/1.8980}.

\bibitem[{Russell and Norvig(2003)}]{russell2003artificial}
Russell, S., and Norvig, P., \emph{Artificial {I}ntelligence: {a} {m}odern
  {a}pproach}, Prentice Hall, 2003, Chaps. 4, 16.

\bibitem[{Likhachev et~al.(2005)Likhachev, Ferguson, Gordon, Stentz, and
  Thrun}]{likhachev2005anytime}
Likhachev, M., Ferguson, D.~I., Gordon, G.~J., Stentz, A., and Thrun, S.,
  \enquote{Anytime Dynamic A*: An Anytime, Replanning Algorithm.} \emph{ICAPS},
  2005, pp. 262--271.

\bibitem[{Bhattacharya and Gavrilova(2008)}]{bhattacharya2008roadmap}
Bhattacharya, P., and Gavrilova, M.~L., \enquote{Roadmap-based Path
  Planning-using the Voronoi Diagram for a Clearance-based Shortest Path,}
  \emph{Robotics \& Automation Magazine, IEEE}, Vol.~15, No.~2, 2008, pp.
  58--66.
\newblock \doi{10.1109/mra.2008.921540}.

\bibitem[{Lozano-P{\'e}rez and Wesley(1979)}]{lozano-perez1979algorithm}
Lozano-P{\'e}rez, T., and Wesley, M.~A., \enquote{An Algorithm for Planning
  Collision-free Paths among Polyhedral Obstacles,} \emph{Communications of the
  ACM}, Vol.~22, No.~10, 1979, pp. 560--570.
\newblock \doi{10.1145/359156.359164}.

\bibitem[{Bhattacharya et~al.(2012)Bhattacharya, Likhachev, and
  Kumar}]{bhattacharya2012topological}
Bhattacharya, S., Likhachev, M., and Kumar, V., \enquote{Topological
  Constraints in Search-based Robot Path Planning,} \emph{Autonomous Robots},
  Vol.~33, No.~3, 2012, pp. 273--290.
\newblock \doi{10.1007/s10514-012-9304-1}.

\bibitem[{Gibson(1977)}]{gibson1977theory}
Gibson, J.~J., \enquote{The Theory of Affordances,} \emph{Perceiving, acting,
  and knowing: Toward an ecological psychology}, 1977, pp. 67--82.

\bibitem[{Bottasso et~al.(2008)Bottasso, Leonello, and
  Savini}]{bottasso2008path}
Bottasso, C.~L., Leonello, D., and Savini, B., \enquote{Path Planning for
  Autonomous Vehicles by Trajectory Smoothing Using Motion Primitives,}
  \emph{IEEE Transactions on Control Systems Technology}, Vol.~16, No.~6, 2008,
  pp. 1152--1168.
\newblock \doi{10.1109/tcst.2008.917870}.

\bibitem[{Bellman(1961)}]{bellman1961curse}
Bellman, R., \enquote{Curse of Dimensionality,} \emph{Adaptive control
  processes: a guided tour. Princeton, NJ}, 1961.
\newblock \doi{10.1007/978-0-387-30164-8_192}.

\bibitem[{LaValle(1998)}]{lavalle1998rapidly}
LaValle, S.~M., \enquote{Rapidly-exploring Random Trees: A New Tool for Path
  Planning,} 1998.

\bibitem[{Karaman and Frazzoli(2010)}]{karaman2010optimal}
Karaman, S., and Frazzoli, E., \enquote{Optimal Kinodynamic Motion Planning
  Using Incremental Sampling-based Methods,} \emph{49th IEEE conference on
  decision and control (CDC)}, IEEE, 2010, pp. 7681--7687.
\newblock \doi{10.1109/cdc.2010.5717430}.

\bibitem[{Nasir et~al.(2013)Nasir, Islam, Malik, Ayaz, Hasan, Khan, and
  Muhammad}]{nasir2013rrt}
Nasir, J., Islam, F., Malik, U., Ayaz, Y., Hasan, O., Khan, M., and Muhammad,
  M.~S., \enquote{Rrt*-smart: A Rapid Convergence Implementation of Rrt,}
  \emph{International Journal of Advanced Robotic Systems}, Vol.~10, 2013.
\newblock \doi{10.5772/56718}.

\bibitem[{Berenson et~al.(2009)Berenson, Srinivasa, Ferguson, and
  Kuffner}]{berenson2009manipulation}
Berenson, D., Srinivasa, S.~S., Ferguson, D., and Kuffner, J.~J.,
  \enquote{Manipulation Planning on Constraint Manifolds,} \emph{Robotics and
  Automation, 2009. ICRA'09. IEEE International Conference on}, IEEE, 2009, pp.
  625--632.
\newblock \doi{10.1109/robot.2009.5152399}.

\bibitem[{Hirtle and Jonides(1985)}]{hirtle1985evidence}
Hirtle, S.~C., and Jonides, J., \enquote{Evidence of Hierarchies in Cognitive
  Maps,} \emph{Memory \& cognition}, Vol.~13, No.~3, 1985, pp. 208--217.
\newblock \doi{10.3758/bf03197683}.

\bibitem[{Shalizi and Crutchfield(2001)}]{shalizi2001computational}
Shalizi, C.~R., and Crutchfield, J.~P., \enquote{Computational Mechanics:
  Pattern and Prediction, Structure and Simplicity,} \emph{Journal of
  statistical physics}, Vol. 104, No. 3-4, 2001, pp. 817--879.

\bibitem[{Kong and Mettler(2011)}]{kong2011investigation}
Kong, Z., and Mettler, B., \enquote{An Investigation of Spatial Behavior in
  Agile Guidance Tasks,} \emph{Systems, Man, and Cybernetics (SMC), 2011 IEEE
  International Conference on}, IEEE, 2011, pp. 2473--2480.
\newblock \doi{10.1109/icsmc.2011.6084049}.

\bibitem[{Enns and Keviczky(2006)}]{enns2006dynamic}
Enns, D., and Keviczky, T., \enquote{Dynamic Inversion Based Flight Control for
  Autonomous Rmax Helicopter,} \emph{American Control Conference, 2006}, IEEE,
  2006, pp. 8--pp.
\newblock \doi{10.1109/acc.2006.1657330}.

\bibitem[{Bellman(1957)}]{bellman1957dynamic}
Bellman, R., \enquote{Dynamic Programming,} \emph{Princeton University Press},
  Vol.~89, 1957, p.~92.
\newblock \doi{10.2307/3610676}.

\bibitem[{Khalil(2002)}]{khalil2002nonlinear}
Khalil, H., \emph{Nonlinear Systems}, Prentice Hall, 2002, Chap.~4.

\bibitem[{Moulay and Perruquetti(2005)}]{moulay2005lyapunov}
Moulay, E., and Perruquetti, W., \enquote{Lyapunov-based Approach for Finite
  Time Stability and Stabilization,} \emph{Proceedings of the 44th IEEE
  Conference on Decision and Control}, IEEE, 2005, pp. 4742--4747.

\bibitem[{Ferguson et~al.(2005)Ferguson, Likhachev, and
  Stentz}]{ferguson2005guide}
Ferguson, D., Likhachev, M., and Stentz, A., \enquote{A Guide to
  Heuristic-based Path Planning,} \emph{Proceedings of the international
  workshop on planning under uncertainty for autonomous systems, international
  conference on automated planning and scheduling (ICAPS)}, 2005, pp. 9--18.

\bibitem[{Feit et~al.(2010)Feit, Toval, Hovagimian, and
  Greenstadt}]{feit2010travel}
Feit, A., Toval, L., Hovagimian, R., and Greenstadt, R., \enquote{A Travel-time
  Optimizing Edge Weighting Scheme for Dynamic Re-planning.} \emph{In
  Proceedings of the AAAI Workshop: Bridging the Gap Between Task and Motion
  Planning}, 2010.

\bibitem[{Verma and Mettler(2016{\natexlab{b}})}]{verma2016scaling}
Verma, A., and Mettler, B., \enquote{Scaling Effects in Guidance Performance in
  Confined Environments,} \emph{Journal of Guidance, Control, and Dynamics},
  2016{\natexlab{b}}, pp. 1--12.
\newblock \doi{10.2514/1.g001415}.

\bibitem[{Bernoulli(1954)}]{bernoulli1954exposition}
Bernoulli, D., \enquote{Exposition of a New Theory on the Measurement of Risk,}
  \emph{Econometrica: Journal of the Econometric Society}, 1954, pp. 23--36.
\newblock \doi{10.1142/9789814293501_0002}.

\bibitem[{Schouwenaars et~al.(2004)Schouwenaars, How, and
  Feron}]{schouwenaars2004receding}
Schouwenaars, T., How, J., and Feron, E., \enquote{Receding Horizon Path
  Planning with Implicit Safety Guarantees,} \emph{American Control Conference,
  2004. Proceedings of the 2004}, Vol.~6, IEEE, 2004, pp. 5576--5581.
\newblock \doi{10.23919/acc.2004.1384742}.

\bibitem[{Klyubin et~al.(2005)Klyubin, Polani, and
  Nehaniv}]{klyubin2005empowerment}
Klyubin, A.~S., Polani, D., and Nehaniv, C.~L., \enquote{Empowerment: A
  Universal Agent-centric Measure of Control,} \emph{Evolutionary Computation,
  2005. The 2005 IEEE Congress on}, Vol.~1, IEEE, 2005, pp. 128--135.
\newblock \doi{10.1109/CEC.2005.1554676}.

\bibitem[{Mettler and Kong(2008)}]{mettler2008receding}
Mettler, B., and Kong, Z., \enquote{Receding Horizon Trajectory Optimization
  with a Finite-state Value Function Approximation,} \emph{American Control
  Conference, 2008}, IEEE, 2008, pp. 3810--3816.
\newblock \doi{10.1109/acc.2008.4587087}.

\bibitem[{Verma et~al.(2015)Verma, Feit, and Mettler}]{verma2015investigation}
Verma, A., Feit, A., and Mettler, B., \enquote{Investigation of Human
  First-person Guidance Strategy from Gaze Tracking Data,} \emph{Systems, Man,
  and Cybernetics}, IEEE, 2015.
\newblock \doi{10.1109/smc.2015.192}.

\end{thebibliography}

\end{document}